\newtheorem{theorem}{Theorem}[section]
\newtheorem{lemma}{Lemma}[section]
\newtheorem{algorithm}{Algorithm}[section]
\newtheorem{remark}{Remark}[section]
\newtheorem{example}{Example}[section]
\newcommand{\T}[0]{\mathsf{T}}
\newcommand*{\diff}{\mathop{}\!\mathrm{d}}
\DeclareMathOperator{\Cov}{Cov}
\DeclareMathOperator{\Var}{Var}
\DeclareMathOperator{\E}{E}
\DeclareMathOperator{\N}{N}
\newcommand{\vmu}[0]{\boldsymbol{\mu}}
\newcommand{\vxi}[0]{\boldsymbol{\xi}}
\newcommand{\vtheta}[0]{\boldsymbol{\theta}}
\newcommand{\ve}{\mathbf{e}}
\newcommand{\vf}{\mathbf{f}}
\newcommand{\vg}{\mathbf{g}}
\newcommand{\vh}{\mathbf{h}}
\newcommand{\vk}{\mathbf{k}}
\newcommand{\vm}{\mathbf{m}}
\newcommand{\vo}{\mathbf{o}}
\newcommand{\vq}{\mathbf{q}}
\newcommand{\vr}{\mathbf{r}}
\newcommand{\vx}{\mathbf{x}}
\newcommand{\vy}{\mathbf{y}}
\newcommand{\MC}{\mathbf{C}}
\newcommand{\MD}{\mathbf{D}}
\newcommand{\MG}{\mathbf{G}}
\newcommand{\MI}{\mathbf{I}}
\newcommand{\MK}{\mathbf{K}}
\newcommand{\MP}{\mathbf{P}}
\newcommand{\MQ}{\mathbf{Q}}
\newcommand{\MR}{\mathbf{R}}
\newcommand{\MS}{\mathbf{S}}
\begin{document}
\title{On the relation between Gaussian process quadratures and sigma-point methods}

\author{Simo S\"arkk\"a\thanks{The first author is grateful to the Academy of Finland for financial support.}\thanks{There are no conflict-of-interest or financial disclosure statements to be made at this time}\thanks{Authors' addresses: S. S\"arkk\"a, Aalto University, Rakentajanaukio 2 c, 02150 Espoo, Finland, (simo.sarkka@aalto.fi); J. Hartikainen, Rocsole Ltd., 70150 Kuopio, Finland, (jouni.hartikainen@rocsole.com); L. Svensson, Chalmers University of Technology, SE-412 96 Gothenburg, Sweden, (lennart.svensson@chalmers.se); F. Sandblom, Volvo Group Trucks Technology, SE-405 08 Gothenburg, Sweden, (fredrik.sandblom@volvo.com).},
Jouni Hartikainen,
Lennart Svensson, and
Fredrik Sandblom}%
%
%
%
%
%

%
%
%
%
%
%
%
%

%
%
%
%
%
%
%
%
%
%
%
%
%
%
%
%
%

%
\markboth{}%
{}
\maketitle

\begin{abstract}
This article is concerned with Gaussian process quadratures, which are numerical integration methods based on  Gaussian process regression methods, and sigma-point methods, which are used in advanced non-linear Kalman filtering and smoothing algorithms. We show that many sigma-point methods can be interpreted as Gaussian quadrature based methods with suitably selected covariance functions. We show that this interpretation also extends to more general multivariate Gauss--Hermite integration methods and related spherical cubature rules. Additionally, we discuss different criteria for selecting the sigma-point locations: exactness for multivariate polynomials up to a given order, minimum average error, and quasi-random point sets. The performance of the different methods is tested in numerical experiments.
\end{abstract}

\IEEEpeerreviewmaketitle

\section{Introduction}
Gaussian process quadratures \cite{OHagan:1978,OHagan:1991,Minka:2000,Osborne2012,Osborne:2012b,Sarkka+Hartikainen+Svensson+Sandblom:2014} are methods to numerically compute integrals of the form
\begin{equation}
  \mathcal{I}[\vg] = \int \vg(\vx) \, w(\vx) \, \diff\vx,
  \label{eq:gaussint}
\end{equation}
where $\vg ~:~ \mathbb{R}^n \mapsto \mathbb{R}^m$ is a (non-linear) integrand function and $w(\vx)$ is a given, typically positive, weight function such that $\int w(\vx) \, \diff\vx < \infty$. In Gaussian process quadratures the function $\vg(\vx)$ is approximated with a Gaussian process regressor \cite{Rasmussen+Williams:2006} and the integral is approximated with that of the Gaussian process regressor. 

Sigma-point methods \cite{Julier+Uhlmann:1995,Julier+Uhlmann+Durrant-Whyte:2000,Wan+Merwe:2001,Merwe:2004,Julier+Uhlmann:2004,Sarkka:2006,Simandl:2006,Sarkka:2008,Simandl:2009}  can be seen \cite{Sarkka:2013} as methods which approximate the above integrals via
\begin{equation}
  \int \vg(\vx) \, w(\vx) \, \diff\vx \approx \sum_i W_i \, \vg(\vx_i),
\end{equation}
where $W_i$ are some predefined weights and $\vx_i$ are the sigma-points (classically called abscissas). Typically the evaluation points and weights are selected such that when $\vg$ is a multivariate polynomial up to a certain order, the approximation is exact.

A particularly useful class of methods is obtained when the weight function is selected to be a multivariate Gaussian density $w(\vx) = \N(\vx \mid \vm,\MP)$. In the context of Gaussian process quadratures it then turns out that the integral of the Gaussian process regressor can be computed in closed form provided that the covariance function of the process is chosen to be a squared exponential \cite{Rasmussen+Williams:2006,Deisenroth:2012} (i.e., exponentiated quadratic). This kind of quadrature methods are also often referred to as Bayesian or Bayes--Hermite quadratures. They are closely related to Gauss--Hermite quadratures in the sense that as Gaussian quadratures can be seen to form a polynomial approximation to the integrand via point-evaluations, Gaussian process quadratures use a Gaussian process regression approximation instead \cite{OHagan:1978,OHagan:1991,Minka:2000}. Because Gaussian process regressors can be used to approximate much larger class of functions than polynomial approximations \cite{Rasmussen+Williams:2006}, they can be expected to perform much better also in numerical integration.

The selection of a Gaussian weight function is also particularly useful in non-linear filtering and smoothing, because the equations of non-linear Gaussian (Kalman) filters and smoothers \cite{Kushner:1967,Kushner:2011,Ito+Xiong:2000,Sarkka+Hartikainen:2010a,Sarkka:2013} consist of Gaussian integrals of the above form and linear operations on vectors and matrices. The selection of different weights and sigma-points leads to different brands of approximate filters and smoothers \cite{Sarkka:2013}. For example, the multidimensional Gaussian type of Gauss--Hermite quadrature and cubature
based filters and smoothers \cite{Ito+Xiong:2000,Wu+Hu+Wu+Hu:2006,Arasaratnam+Haykin:2009,Sarkka+Hartikainen:2010a,Arasaratnam+Haykin:2011} are based on explicit numerical integration of the Gaussian integrals. The unscented transform based methods as well as other sigma-point methods \cite{Julier+Uhlmann:1995,Julier+Uhlmann+Durrant-Whyte:2000,Wan+Merwe:2001,Merwe:2004,Julier+Uhlmann:2004,Sarkka:2006,Simandl:2006,Sarkka:2008,Simandl:2009} can also be retrospectively interpreted to belong to the class of Gaussian numerical integration based methods \cite{Wu+Hu+Wu+Hu:2006}. Conversely, Gaussian type of quadrature or cubature based methods can also be interpreted to be special cases of sigma-point methods. Furthermore, the classical Taylor series based methods \cite{Jazwinski:1970} and Stirling's interpolation based methods \cite{Norgaard+Poulsen+Ravn:2000,Simandl:2009} can be seen as ways to approximate the integrand such that the Gaussian integral becomes tractable (cf. \cite{Sarkka:2013}). The recent Fourier--Hermite series \cite{Sarmavuori+Sarkka:2012}, Hermite polynomial \cite{Sandblom+Svensson:2012} methods are also based on numerical approximation of the integrands.

The aim of this article is to present new Gaussian process quadrature based methods for non-linear filtering and smoothing, and to analyze their connection with sigma-point methods and multivariate numerical integration methods. We show that many sigma-point filtering and smoothing algorithms such as unscented Kalman filters and smoothers, cubature Kalman filters and smoothers, and Gauss--Hermite Kalman filters and smoothers can be seen as special cases of the proposed methods with suitably chosen covariance functions. More generally, we show that many classical multivariate Gaussian quadrature methods, including Gauss--Hermite rules \cite{Davis+Rabinowitz:1984}, and symmetric integration formulas \cite{McNamee+Stenger:1967} are special cases of the present methodology. We also discuss different criteria for selecting the sigma-point (abscissa) locations: exactness for multivariate polynomials up to a given order, minimum average error, and quasi-random point sets. 

This article is an extended version of the conference article \cite{Sarkka+Hartikainen+Svensson+Sandblom:2014}, where we analyzed the use of Gaussian process quadratures in non-linear filtering and smoothing as well as their connection to the unscented transform and Gauss--Hermite quadratures. In this article, we deepen and sharpen the analysis of those connections and extend our analysis to a more general class of spherically symmetric integration rules. We also analyze different sigma-point selection schemes as well as provide more extensive set of numerical experiments.

\section{Background}

\subsection{Non-Linear Gaussian (Kalman) Filtering and Smoothing}
Non-linear Gaussian (Kalman) filters and smoothers  \cite{Ito+Xiong:2000,Wu+Hu+Wu+Hu:2006,Sarkka+Hartikainen:2010a,Sarkka:2013} are methods which can be used to approximate the filtering distributions $p(\vx_k \mid \vy_1,\ldots,\vy_k)$ and smoothing distributions $p(\vx_k \mid \vy_1,\ldots,\vy_T)$ of non-linear state-space models of the form
\begin{equation}
\begin{split}
  \vx_k &= \vf(\vx_{k-1}) + \vq_{k-1}, \\
  \vy_k &= \vh(\vx_k) + \vr_k,
\end{split}
\label{eq:ssmodel}
\end{equation}
where, for $k=1,2,\ldots,T$, $\vx_k \in \mathbb{R}^n$ are the hidden states, $\vy_k \in \mathbb{R}^d$ are the measurements, and $\vq_{k-1} \sim \N(\mathbf{0},\MQ_{k-1})$ and $\vr_k \sim \N(\mathbf{0},\MR_k)$ are the process and measurements noises, respectively. The non-linear function $\vf(\cdot)$ is used to model the dynamics of the system and $\vh(\cdot)$ models the mapping from the states to the measurements.

Non-linear Gaussian filters (see, e.g., \cite{Sarkka:2013}, page 98) are general methods to produce Gaussian approximations to the filtering distributions:
\begin{equation}
  p(\vx_k \mid \vy_1,\ldots,\vy_k) \approx
  \N(\vx_k \mid \vm_k,\MP_k), \quad k=1,2,\ldots,T.
\end{equation}
Non-linear Gaussian smoothers (see, e.g., \cite{Sarkka:2013}, page 154) are the corresponding methods to produce approximations to the smoothing distributions:
\begin{equation}
  p(\vx_k \mid \vy_1,\ldots,\vy_T) \approx
  \N(\vx_k \mid \vm^s_k,\MP^s_k), \quad k=1,2,\ldots,T.
\end{equation}
Both Gaussian filters and smoothers above can be easily generalized to state-space models with non-additive noises (see \cite{Sarkka:2013}), but here we only consider the additive noise case.

A general additive noise one-step moment-matching-based Gaussian filter algorithm can be written in the following form.

\begin{algorithm}[Non-Linear Gaussian filter] \label{alg:gf} The prediction and update steps of the non-linear Gaussian (Kalman) filter are \cite{Ito+Xiong:2000,Sarkka:2013}:
  \begin{itemize}
  \item Prediction:
  \begin{equation}
  \begin{split}
    \vm^-_{k} &=
      \int \vf(\vx_{k-1}) \,
        \N(\vx_{k-1} \mid \vm_{k-1},\MP_{k-1}) \, \diff\vx_{k-1} \\
    \MP^-_{k} &= \int (\vf(\vx_{k-1}) - \vm^-_{k}) \,
       (\vf(\vx_{k-1}) - \vm^-_{k})^\T \\
       &\qquad \times
       \N(\vx_{k-1} \mid \vm_{k-1},\MP_{k-1}) \, \diff\vx_{k-1} 
       + \MQ_{k-1}.
  \end{split}
  \label{eq:gf_predict}
  \end{equation}

  \item Update:
  \begin{equation}
  \begin{split}    
    \vmu_{k} &= 
       \int \vh(\vx_{k}) \,
        \N(\vx_{k} \mid \vm^-_{k},\MP^-_{k}) \, \diff\vx_k \\
    \MS_{k} &= \int (\vh(\vx_{k}) - \vmu_{k}) \,
       (\vh(\vx_{k}) - \vmu_{k})^\T \\ &\qquad \times
       \N(\vx_{k} \mid \vm^-_{k},\MP^-_{k}) \, \diff\vx_{k}
       + \MR_{k} \\
    \MC_k &=  \int (\vx_k - \vm^-_k) \,
                  (\vh(\vx_k) - \vmu_k)^\T \\ &\qquad \times
                  \N(\vx_k \mid \vm^-_k,\MP^-_k) \, \diff\vx_k \\
    \MK_{k} &= \MC_k \, \MS^{-1}_{k} \\
    \vm_{k} &= \vm^-_{k} + \MK_{k} \, (\vy_k - \vmu_k) \\
    \MP_{k} &= \MP^-_{k} - \MK_{k} \, \MS_{k} \, \MK^\T_{k}.
  \end{split}
  \label{eq:gf_update}
  \end{equation}
  \end{itemize}
The filtering is started from initial mean and covariance, $\vm_0$ and $\MP_0$, respectively, such that $\vx_0 \sim \N(\vm_0,\MP_0)$. Then the prediction and update steps are applied for $k=1,2,3,\ldots,T$.
\end{algorithm}
The result of the filter is a sequence of approximations
\begin{equation}
  p(\vx_k \mid \vy_1,\ldots,\vy_k) \approx
  \N(\vx_k \mid \vm_k,\MP_k), \quad k=1,2,\ldots,T.
\end{equation}
The corresponding smoothing algorithm can be written in the following form.
\begin{algorithm}[Non-Linear Gaussian RTS smoother] \label{alg:gs}
 The equations of the non-linear Gaussian (Rauch--Tung--Striebel, RTS) smoother are the following \cite{Sarkka+Hartikainen:2010a,Sarkka:2013}:
\begin{equation}
\begin{split}
   \vm^-_{k+1} &= \int \vf(\vx_k) \,
      \mathrm{N}(\vx_k \mid  \vm_{k},\MP_{k}) \, \diff\vx_k \\
   \MP^-_{k+1} &= \int [\vf(\vx_k) - \vm^-_{k+1}] \,
      [\vf(\vx_k) - \vm^-_{k+1}]^\T \,
     \\ &\qquad \times
      \mathrm{N}(\vx_k \mid  \vm_{k},\MP_{k}) \,
      \diff\vx_k + \MQ_k \\
   \MD_{k+1} &= \int [\vx_k - \vm_{k}] \, [\vf(\vx_k) - \vm^-_{k+1}]^\T
     \\ &\qquad \times
      \mathrm{N}(\vx_k \mid  \vm_{k},\MP_{k}) \, \diff\vx_k \\
   \MG_{k} &= \MD_{k+1} \, [\MP^-_{k+1}]^{-1} \\
   \vm^s_{k} &= \vm_{k}
     + \MG_k \, (\vm^s_{k+1} - \vm^-_{k+1}) \\
   \MP^s_{k} &= \MP_{k}
     + \MG_k \, (\MP^s_{k+1} - \MP^-_{k+1}) \, \MG_k^\T.
\end{split}
\label{eq:gs}
\end{equation}
The smoothing recursion is started from the filtering result of the last time step $k = T$, that is, $\vm^s_T = \vm_T$, $\MP^s_T = \MP_T$ and proceeded backwards for $k=T-1,T-2,\ldots,1$. 
\end{algorithm}
The approximations produced by the smoother are
\begin{equation}
  p(\vx_k \mid \vy_1,\ldots,\vy_T) \approx
  \N(\vx_k \mid \vm^s_k,\MP^s_k), \quad k=1,2,\ldots,T.
\end{equation}

Both the filter and smoother above can be derived from the following Gaussian moment matching "transform" \cite{Sarkka:2013} (the terminology comes from unscented transform).

\begin{algorithm}[Gaussian moment matching of an additive transform]
  \label{alg:moment_trans1} The moment matching based Gaussian
  approximation to the joint distribution of $\vx$ and the
  transformed random variable $\vy = \vg(\vx) + \vq$, where
  $\vx \sim \N(\vm,\MP)$ and $\vq \sim
  \N(\mathbf{0},\MQ)$, is given by
  \begin{equation}
     \begin{pmatrix}
       \vx \\ \vy
     \end{pmatrix} \sim
     \N\left(
     \begin{pmatrix}
       \vm \\ \vmu_M
     \end{pmatrix},
     \begin{pmatrix}
       \MP   & \MC_M \\
       \MC_M^\T & \MS_M
     \end{pmatrix} \right),
  \end{equation}
  where
  \begin{equation}
  \begin{split}
    \vmu_M &= 
     \int \vg(\vx) \, \N(\vx \mid \vm,\MP) \, \diff\vx, \\
    \MS_M &=  \int (\vg(\vx) - \vmu_M) \,
                       (\vg(\vx) - \vmu_M)^\T \,
                       \N(\vx \mid \vm,\MP) \, \diff\vx + \MQ, \\
    \MC_M &=  \int (\vx - \vm) \,
                       (\vg(\vx) - \vmu_M)^\T \,
                       \N(\vx \mid \vm,\MP) \, \diff\vx.
  \end{split}
  \end{equation}
\end{algorithm}

\subsection{Gaussian Integration and Sigma-Point Methods}
Sigma-point filtering and smoothing methods can generally be described as methods which approximate the Gaussian integrals in the Gaussian filtering and smoothing equations (and in the Gaussian moment matching transform) as
\begin{equation}
  \int \vg(\vx) \, \N(\vx \mid \vm,\MP) \, \diff\vx
  \approx \sum_i W_i \, \vg(\vx_i),
\end{equation}
where $W_i$ are some predefined weights and $\vx_i$ are the sigma-points. Typically, the sigma-point methods use so called {\em stochastic decoupling} which refers to the idea that we do a change of variables
\begin{equation}
\begin{split}
  \int \vg(\vx) \, \N(\vx \mid \vm,\MP) \, \diff\vx 
  &= \int
  \underbrace{\vg(\vm + \sqrt{\MP} \, \vxi)}_{\tilde{\vg}(\vxi)}
     \, \N(\vxi \mid \mathbf{0},\MI) \, \diff\vxi \\
\end{split}
\label{eq:stocdec}
\end{equation}
where $\MP = \sqrt{\MP} \, \sqrt{\MP}^\T$. This implies that we only need to design weights $W_i$ and unit sigma-points $\vxi_i$ for integrating against unit Gaussian distributions:
\begin{equation}
  \int \tilde{\vg}(\vxi) \,
  \N(\vxi \mid \mathbf{0},\MI) \, \diff\vxi
  \approx \sum_i W_i \, \tilde{\vg}(\vxi_i),
\label{eq:unit_int}
\end{equation}
thus leading to approximations of the form
\begin{equation}
  \int \vg(\vx) \, \N(\vx \mid \vm,\MP) \, \diff\vx
  \approx \sum_i W_i \,
  \vg(\vm + \sqrt{\MP} \, \vxi_i).
\end{equation}

Different sigma-point methods correspond to different choices of weights $W_i$ and  unit sigma-points $\vxi_i$. For example, the canonical unscented transform \cite{Julier+Uhlmann:1995} uses the following set of $2n+1$ weights (recall that $n$ is the dimensionality of the state) and sigma-points:
\begin{equation}
\begin{split}
   W_0 &= \frac{\kappa}{n + \kappa}, \quad
   W_i = \frac{1}{2 (n + \kappa)}, \quad i=1,\ldots,2n, \\
  \vxi_i &= \begin{cases}
     \mathbf{0}, & i = 0, \\
     \sqrt{n + \kappa} \, \ve_i,& i = 1,\ldots,n, \\
    -\sqrt{n + \kappa} \, \ve_{i-n},& i = n+1,\ldots,2n. \\
  \end{cases}
\end{split}
\label{eq:utwxi}
\end{equation}
where $\kappa$ is a design parameter in the algorithm and $\ve_i \in \mathbb{R}^n$ is the unit vector towards the direction of the $i$th coordinate axis.

Note that sigma-point methods sometimes use different weights for the integrals appearing in the mean and covariance computations of Gaussian filters and smoothers. However, here we will only concentrate on the methods which use the same weights for both in order to derive more direct connections between the methods. For example, the above unscented transform weights are just a special case of more general unscented transforms (see, e.g., \cite{Sarkka:2013}).

\subsection{Gaussian Process Regression}
Gaussian process quadrature \cite{OHagan:1991,Minka:2000} is based on forming a Gaussian process (GP) regression \cite{Rasmussen+Williams:2006} approximation to the integrand using pointwise evaluations and then integrating the approximation. In GP regression \cite{Rasmussen+Williams:2006} the purpose is to predict the value of an unknown function 
\begin{equation}
  o = g(\vx)
\end{equation}
at a certain test point $(o^*,\vx^*)$ based on a finite number of
training samples $\mathcal{D} = \{ (o_j,\vx_j) : j=1,\ldots,N \}$ observed from it. The difference to classical regression is that instead of postulating a parametric regression function $g_\theta(\vx;\vtheta)$, where $\vtheta \in \mathbb{R}^D$ are the parameters, in GP regression we put a Gaussian process prior with a given covariance function $K(\vx,\vx')$ on the unknown functions $g_K(\vx)$. 

In practice, the observations are often assumed to contain noise and hence a typical model setting is:
\begin{equation}
\begin{split}
  g_K &\sim \mathrm{GP}(0,K(\vx,\vx')) \\
  o_j &= g_K(\vx_j) + \epsilon_j, \quad \epsilon_j \sim \mathrm{N}(0,\sigma^2),
\end{split}
\end{equation}
where the first line above means that the random function $g_K$ has a zero mean Gaussian process prior with the given covariance function $K(\vx,\vx')$. A commonly used covariance function is the exponentiated quadratic (also called squared exponential) covariance function
\begin{equation}
  K(\vx,\vx') = s^2 \, \exp\left(-\frac{1}{2 \ell^2}
    \| \vx - \vx' \|^2 \right),
\label{eq:secov}
\end{equation}
where $s, \ell > 0$ are parameters of the covariance function (see \cite{Rasmussen+Williams:2006}).

The GP regression equations can be derived as follows. Assume that we want to estimate the value of the noise-free function $g(\vx)$ based on its Gaussian process approximation $g_K(\vx)$
at a test point $\vx$ given the vector of observed values $\vo = ( o_1, \ldots, o_N )$. Due to the Gaussian process assumption we now get
\begin{equation}
  \begin{pmatrix}
    \vo \\ g_K(\vx)
  \end{pmatrix} \sim
  \N\left(
    \begin{pmatrix}
       \mathbf{0} \\ 0
    \end{pmatrix},
    \begin{pmatrix}
       \MK + \sigma^2 \MI & \vk(\vx) \\
       \vk^\T(\vx) & K(\vx,\vx)
    \end{pmatrix}
  \right)
\end{equation}
where $\MK = [K(\vx_i,\vx_j)]$ is the joint
covariance of observed points, $K(\vx,\vx)$ is the (co)variance of the test point, $\vk(\vx) = [K(\vx,\vx_{i})]$ is the vector cross covariances with the test point.

The Bayesian estimate of the unknown value of $g_K(\vx)$ is now given by its posterior mean, given the training data. Because everything is Gaussian, the posterior distribution is Gaussian and hence described by the posterior mean and (auto)covariance functions:
\begin{equation}
\begin{split}
  \E[g_K(\vx) \mid \vo] &= \vk^\T(\vx) \,
  (\MK + \sigma^2 \MI)^{-1} \, \vo \\
  \Cov[g_K(\vx)  \mid \vo] &= K(\vx,\vx')
  -
  \vk^\T(\vx) \, (\MK + \sigma^2 \MI)^{-1}
    \vk(\vx').
\end{split}
\end{equation}
These are the Gaussian process regression equations in their typical form \cite{Rasmussen+Williams:2006}, in the special case where $g$ is scalar. The extension to multiple output dimensions is conceptually straightforward (see, e.g., \cite{Rasmussen+Williams:2006,Sarkka:2011}), but construction of the covariance functions as well as the practical computational methods tend to be complicated 
\cite{Alvarez+Lawrence:2009,Alvarez+Luengo+Titsias+Lawrence:2010}. However, a typical easy approach to the multivariate case is to treat each of the dimensions independently.

\subsection{Gaussian Process Quadrature} \label{sec:qpquad}
In Gaussian process quadrature \cite{OHagan:1991,Minka:2000} the basic idea is to approximate the integral of a given function $g$ against a weight function $w(\vx)$, that is,
\begin{equation}
  \mathcal{I}[g] = \int g(\vx) \, w(\vx) \, \diff\vx,
  \label{eq:gaussint_b}
\end{equation}
by evaluating the function $g$ at a finite number of points and then by forming a Gaussian process approximation $g_K$ to the function. The integral is then approximated by integrating the Gaussian process approximation (or its posterior mean) which is conditioned on the evaluation points instead of the function itself. Here we assume that $g$ is scalar for simplicity as we can always take a vector function elementwise.

Gaussian process quadratures are related to a regression interpretation of classical Gaussian quadrature integration, that is, we can interpret these integration methods as orthogonal polynomial approximations of the integrand evaluated at certain finite number of points \cite{Minka:2000}. The integral is then approximated by integrating the polynomial instead of the original function. However, the aim of Gaussian process quadrature is to get a good performance in average, whereas in classical polynomial quadratures the integration rule is designed to be exact for a limited class of (polynomial) functions. Still, these approaches are very much linked together \cite{Minka:2000}.

Due to linearity of integration, the posterior mean of the integral of the Gaussian process regressor is given as
\begin{equation}
\begin{split}
    \E\left[\int g_K(\vx) \, w(\vx) \, \diff \vx \mid \vo \right] &= 
  \int \E\left[g_K(\vx) \mid \vo \right] \, w(\vx) \, \diff \vx,
\end{split}
\end{equation}
where the ``training set'' $\vo = \begin{pmatrix} g(\vx_1), \ldots, g(\vx_N) \end{pmatrix}$ now contains the values of the function $g$ evaluated at certain selected inputs.

The posterior variance of the integral can be evaluated in an analogous manner, and it is sometimes used to optimize the evaluation points of the function $g_N$ \cite{OHagan:1991,Minka:2000,Osborne2012,Osborne:2012b}. The posterior covariance of the approximation is
\begin{equation}
\begin{split}
   &\Var\left[\int g_K(\vx) \, w(\vx) \, \diff \vx \mid \vo \right] \\ & \qquad = 
  \iint \Cov\left[g_K(\vx) \mid \vo \right] \, w(\vx) \, \diff \vx \, w(\vx') \, \diff \vx'.
\end{split}
\end{equation}
That is, when we approximate the integral~\eqref{eq:gaussint_b} with the posterior mean we have
\begin{equation}
\begin{split}
  \int g(\vx) \, w(\vx) \, \diff\vx
 \approx \left[ \int \vk^\T(\vx) \, w(\vx) \,
  \diff\vx \, \right]
  (\MK + \sigma^2 \MI)^{-1} \, \vo,
\end{split}
\label{eq:gpquad0}
\end{equation}
The posterior variance of the (scalar) integral is
\begin{equation}
\begin{split}
   &\Var\left[\int g_K(\vx) \, w(\vx) \, \diff \vx \mid \vo \right] \\ & \qquad = 
  \iint K(\vx,\vx') \, w(\vx) \, \diff \vx \, w(\vx') \, \diff \vx' \\
  &- \left[ \int \vk^\T(\vx) \, w(\vx) \,
  \diff\vx \, \right]
  (\MK + \sigma^2 \MI)^{-1} \,
   \left[ \int \vk(\vx') \, w(\vx') \,
  \diff\vx' \, \right]
\end{split}
\label{eq:gpquadvar0}
\end{equation}
In this article we are specifically interested in the case of Gaussian weight function, which then reduces the integral appearing in the above expressions \eqref{eq:gpquad0} and \eqref{eq:gpquadvar0} to
\begin{equation}
\begin{split}
  \left[\int \vk^\T(\vx) \, \, w(\vx) \, \diff\vx \right]_i
  = \int K(\vx,\vx_i)  \N(\vx \mid \vm,\MP) \, \diff\vx
\end{split}
\end{equation}
It is now easy to see that when the covariance function is a squared exponential $K(\vx,\vx_i) = s^2 \, \exp(-(2 \ell^2)^{-1} \, \| \vx - \vx_i \|^2)$, this integral can be easily computed in closed form by using the computation rules for Gaussian distributions. Furthermore if the covariance function is a multivariate polynomial, then these integrals are given by the moments of the Gaussian distributions, which are also available in closed form.

\section{Gaussian process quadratures for sigma-point filtering and smoothing}
In this section we start by showing how Gaussian process quadratures (GPQ) can be seen as sigma-point methods and then introduce the Gaussian process transform. The Gaussian process transform then enables us to construct GPQ-based non-linear filters and smoothers analogously to \cite{Sarkka:2013}. 

\subsection{GPQ as a sigma-point method}
In this section the aim is to shown how Gaussian process quadratures (GPQ) can be seen as sigma-point methods.
\begin{lemma}[GPQ as a sigma-point method]
The Gaussian process quadrature (or Bayes--Hermite/Bayesian quadrature) can be seen is a sigma-point-type of integral approximation
\begin{equation}
  \int \vg(\vx) \, \N(\vx \mid \vm,\MP) \, \diff\vx
  \approx \sum_{i=1}^N W_i \, \vg(\vx_i),
\end{equation}
where $\vx_i = \vm + \sqrt{\MP} \, \vxi_i$ with the unit sigma-points $\vxi_i$ are selected according to a predefined criterion, and the weights are determined by
\begin{equation}
\begin{split}
  W_i &=
  \bigg[ \left( \int \vk^\T(\vxi) \, \mathrm{N}(\vxi \mid \mathbf{0},\MI)
  \, \diff\vxi \right) \,
   (\MK + \sigma^2 \MI)^{-1} \bigg]_i,
\end{split}
\label{eq:gp_wi}
\end{equation}
where $\MK = [K(\vxi_i,\vxi_j)]$ is the matrix of unit sigma-point covariances and $\vk(\vxi) = [K(\vxi,\vxi_{i})]$ is the vector cross covariances. In principle, the choice of unit sigma-points above is completely free, but good choices of them are discussed in the following sections. 
\end{lemma}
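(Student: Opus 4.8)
The plan is to combine the stochastic decoupling~\eqref{eq:stocdec} with the closed-form posterior-mean expression~\eqref{eq:gpquad0} for the Gaussian process quadrature, and then simply read off the weighted-sum structure. The crucial feature to exploit is that the GPQ approximation~\eqref{eq:gpquad0} is \emph{linear} in the vector $\vo$ of function evaluations, with a coefficient row vector that depends only on the covariance function and the evaluation points---not on $\vg$ itself. This linearity is exactly what lets a GPQ be written as a fixed weighted sum of point evaluations, which is the defining form of a sigma-point rule.

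First I would apply the change of variables $\vx = \vm + \sqrt{\MP}\,\vxi$ from~\eqref{eq:stocdec} to reduce the integral of $\vg$ against $\N(\vx\mid\vm,\MP)$ to an integral of $\tilde{\vg}(\vxi) = \vg(\vm + \sqrt{\MP}\,\vxi)$ against the unit Gaussian $\N(\vxi\mid\mathbf{0},\MI)$, exactly as in~\eqref{eq:unit_int}. This lets me design the quadrature purely in the unit coordinates. Next I would apply the posterior-mean formula~\eqref{eq:gpquad0} to this unit-Gaussian integral, with weight function $w(\vxi) = \N(\vxi\mid\mathbf{0},\MI)$, evaluation points $\vxi_i$, Gram matrix $\MK = [K(\vxi_i,\vxi_j)]$, cross-covariance vector $\vk(\vxi) = [K(\vxi,\vxi_i)]$, and training vector $\vo$ whose $i$th entry is $\tilde{\vg}(\vxi_i)$.

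The final step is bookkeeping: since $\tilde{\vg}(\vxi_i) = \vg(\vm + \sqrt{\MP}\,\vxi_i) = \vg(\vx_i)$, the training vector $\vo$ has entries $\vg(\vx_i)$, and the GPQ approximation becomes $\vw^\T \vo = \sum_{i=1}^N [\vw]_i\,\vg(\vx_i)$, where the row vector $\vw^\T$ is precisely the bracketed product appearing in~\eqref{eq:gp_wi}. Identifying $W_i = [\vw]_i$ then yields the claimed weight formula, and these weights are manifestly independent of $\vg$. I would also note that the vector-output case follows by treating each output dimension elementwise with a shared covariance function---hence shared weights---which is consistent with the scalar-$g$ setting in which~\eqref{eq:gpquad0} was derived.

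I do not expect any genuine analytical obstacle; the entire content of the statement is the observation that the GPQ posterior mean is a linear functional of the evaluations $\vo$ (with no constant term, since the GP prior has zero mean), so that the coefficient vector can be precomputed once the points $\vxi_i$ and the covariance function are fixed. The only point requiring a little care is the consistency of the two coordinate systems: one must ensure that the matrix $\MK$ and the integral $\int\vk^\T(\vxi)\,\N(\vxi\mid\mathbf{0},\MI)\,\diff\vxi$ in~\eqref{eq:gp_wi} are built from the \emph{unit} sigma-points $\vxi_i$ rather than the original $\vx_i$, as it is this decoupling that renders the weights $W_i$ independent of $\vm$ and $\MP$.
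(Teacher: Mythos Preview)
Your proposal is correct and follows essentially the same route as the paper's proof: stochastic decoupling via~\eqref{eq:stocdec}, modeling the decoupled integrand as a zero-mean GP in the unit coordinates, applying the posterior-mean formula~\eqref{eq:gpquad0} with training set $\vo = (g(\vx_1),\ldots,g(\vx_N))$, and then reading off the weights from the resulting linear combination, with the vector case handled componentwise. Your remarks on why the weights are independent of $\vg$, $\vm$, and $\MP$ are accurate and slightly more explicit than the paper's own presentation.
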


\begin{proof}
Let us first use stochastic decoupling \eqref{eq:stocdec} which enables us to only consider unit-Gaussian integration formulas of the form \eqref{eq:unit_int}. Because we can integrate vector functions element-by-element, without loss of generality we can assume that $g(\vx)$ is single-dimensional. Let us now model the function $\vxi \mapsto g(\vm + \sqrt{\MP} \, \vxi)$ as a Gaussian process $g_K$ with a given covariance function $K(\vxi,\vxi')$ and fix the training set for the GP regressor by selecting the points $\vxi_{i}$, $i=1,\ldots,N$, which also determines the corresponding points $\vx_i = \vm + \sqrt{\MP} \, \vxi_i$ such that the training set is $\vo = \begin{pmatrix} g(\vx_1), \ldots, g(\vx_N) \end{pmatrix}$. The GP approximation to the integral now follows from \eqref{eq:gpquad0}:
\begin{equation}
\begin{split}
  &\int g(\vm + \sqrt{\MP} \, \vxi) \, \N(\vxi \mid 0,\MI) \, \diff\vxi \\
  & \approx \left[ \int \vk^\T(\vxi) \, \N(\vxi \mid 0,\MI) \,
  \diff\vxi \, \right]
  (\MK + \sigma^2 \MI)^{-1} \, \vo,
\end{split}
\end{equation}
which when simplified and applied to all the dimensions of $\vg$ gives the result.
\end{proof}
Note that above we actually assume that the stochastically-decoupled-function $\vxi \mapsto g(\vm + \sqrt{\MP} \, \vxi)$ instead of the original integrand $g(\vx)$ has the given covariance function. The reason for this modeling choice is that it enables us to decouple the mean and covariance from the integration formula and hence is computationally beneficial. This also makes the result invariant to affine transformations of the state and it also has a property that the variability of the functions corresponds to the scale of the problem. However, on the other hand, one might argue that it is the function $g(\vx)$ which should actually model and using the stochastically-decoupled-function is ``wrong''. 

\begin{remark}[Variance of GPQ]
  From Equation~\eqref{eq:gpquadvar0} we get that the component-wise variances of the Gaussian process quadrature approximation can be expressed as
\begin{equation}
\begin{split}
  &V_j = 
  \iint K(\vxi,\vxi') \, \mathrm{N}(\vxi \mid \mathbf{0},\MI)
  \, \diff\vxi\, \mathrm{N}(\vxi' \mid \mathbf{0},\MI)
  \, \diff\vxi' \\
  &-
  \int \vk^\T(\vxi) \, \mathrm{N}(\vxi \mid \mathbf{0},\MI)
  \, \diff\vxi
  \,
   (\MK + \sigma^2 \MI)^{-1}
  \int \vk(\vxi') \, \mathrm{N}(\vxi' \mid \mathbf{0},\MI)
  \, \diff\vxi'
\end{split}
\label{eq:gpquadvar1}
\end{equation}
\end{remark}
Using the above integration approximations we can also define a general Gaussian process transform as follows. The reason for introducing the transform is that the corresponding approximate filters and smoothers can be readily constructed in terms of the transform (cf. \cite{Sarkka:2013}), which we will do in the next section.
\begin{algorithm}[Gaussian process transform] \label{alg:gptrans}
  The Gaussian process quadrature based Gaussian approximation to the joint distribution of $\vx$ and the transformed random variable $\vy = \vg(\vx) + \vq$, where $\vx \sim \N(\vm,\MP)$ and $\vq \sim \N(\mathbf{0},\MQ)$, is given by
  \begin{equation}
     \begin{pmatrix}
       \vx \\ \vy
     \end{pmatrix} \sim
     \N\left(
     \begin{pmatrix}
       \vm \\ \vmu_{\mathrm{GP}}
     \end{pmatrix},
     \begin{pmatrix}
       \MP   & \MC_{\mathrm{GP}} \\
       \MC_{\mathrm{GP}}^\T & \MS_{\mathrm{GP}}
     \end{pmatrix} \right),
  \end{equation}
  where
  \begin{equation}
  \begin{split}
   \vx_i &= \vm + \sqrt{\MP} \, \vxi_i \\
    \vmu_{\mathrm{GP}} &= 
    \sum_{i=1}^N W_i \, \vg(\vx_i), \\
    \MS_{\mathrm{GP}} &= 
     \sum_{i=1}^N W_i \, (\vg(\vx_i) - \vmu_{\mathrm{GP}}) \,
                         (\vg(\vx_i) - \vmu_{\mathrm{GP}})^\T
                         + \MQ, \\
    \MC_{\mathrm{GP}} &= 
     \sum_{i=1}^N W_i \, (\vx_i - \vm) \,
                         (\vg(\vx_i) - \vmu_{\mathrm{GP}})^\T,
  \end{split}
  \end{equation}
where $\vxi_i$ is some fixed set of sigma/training points and the weights are given by Equation~\eqref{eq:gp_wi} with some selected covariance function $K(\vxi,\vxi')$.
\end{algorithm}

In this article, at least in the analytical results, we usually assume that the measurements are noise-free, that is, $\sigma^2 = 0$. This enables us to obtain analytically exact relationships with the classical quadrature methods. However, when using Gaussian process quadratures as numerical integration method, it is often beneficial to have at least a small non-zero value for $\sigma^2$ in \eqref{eq:gp_wi}. This kind of ``jitter'' stabilizes numerics and can even be sometimes used to compensate for inaccuracies in modeling.

\begin{example}[GPT with squared exponential kernel] \label{ex:gpt_se}
Let us now consider $\vxi \in \mathbb{R}$ and select the sigma-point locations to be the ones of unscented transform \eqref{eq:utwxi}. With the squared exponential covariance function \eqref{eq:secov} and noise-free measurements ($\sigma^2 = 0$) we then get the weights:
\begin{equation}
W_{0:2} =
\begin{pmatrix} \frac{\mathrm{e}^{-\frac{\kappa + 1}{2\, \left({\ell}^2 + 1\right)}}\, \left(\ell\, \mathrm{e}^{\frac{\kappa + 1}{2\, \left({\ell}^2 + 1\right)}} - 2\, \ell\, \mathrm{e}^{\frac{3\, \left(\kappa + 1\right)}{2\, {\ell}^2}} + \ell\, \mathrm{e}^{\frac{\kappa + 1}{2\, \left({\ell}^2 + 1\right)}}\, \mathrm{e}^{\frac{2\, \left(\kappa + 1\right)}{{\ell}^2}}\right)}{\sqrt{{\ell}^2 + 1}\, {\left(\mathrm{e}^{\frac{\kappa + 1}{{\ell}^2}} - 1\right)}^2}\\ -\frac{\ell\, \mathrm{e}^{\frac{\left(2\, {\ell}^2 + 3\right)\, \left(\kappa + 1\right)}{2\, {\ell}^2\, \left({\ell}^2 + 1\right)}}\, \left(\mathrm{e}^{\frac{\kappa + 1}{2\, \left({\ell}^2 + 1\right)}} - \mathrm{e}^{\frac{\kappa + 1}{2\, {\ell}^2}}\right)}{\sqrt{{\ell}^2 + 1}\, {\left(\mathrm{e}^{\frac{\kappa + 1}{{\ell}^2}} - 1\right)}^2}\\ -\frac{\ell\, \mathrm{e}^{\frac{\left(2\, {\ell}^2 + 3\right)\, \left(\kappa + 1\right)}{2\, {\ell}^2\, \left({\ell}^2 + 1\right)}}\, \left(\mathrm{e}^{\frac{\kappa + 1}{2\, \left({\ell}^2 + 1\right)}} - \mathrm{e}^{\frac{\kappa + 1}{2\, {\ell}^2}}\right)}{\sqrt{{\ell}^2 + 1}\, {\left(\mathrm{e}^{\frac{\kappa + 1}{{\ell}^2}} - 1\right)}^2} \end{pmatrix}.
\end{equation}
An interesting property is that in the limit $\ell \to \infty$ we get
\begin{equation}
\lim_{\ell \to \infty} W_{0:2} =
\left(\begin{array}{c} \frac{\kappa}{\kappa + 1}\\ \frac{1}{2\, \left(\kappa + 1\right)}\\ \frac{1}{2\, \left(\kappa + 1\right)} \end{array}\right)
\end{equation}
which are the unscented transform weights. We return to this relationship in Section~\ref{sec:minka}.
\end{example}

\subsection{GPQs in filtering and smoothing}

In this section we show how to construct filters and smoothers using the Gaussian process quadrature approximations. Because Algorithm~\ref{alg:gptrans} can be seen as a sigma-point method, analogously to other sigma-point filters considered, for example, in \cite{Sarkka:2013}, we can now formulate the following sigma-point filter for model \eqref{eq:ssmodel}, which uses the the unit sigma-points $\vxi_i$ and weights $W_i$ defined by Algorithm~\ref{alg:gptrans}. 
\begin{algorithm}[Gaussian process quadrature filter] The filtering is started from initial mean and covariance, $\vm_0$ and $\MP_0$, respectively, such that $\vx_0 \sim \N(\vm_0,\MP_0)$. Then the following prediction and update steps are applied for $k=1,2,3,\ldots,T$.

\noindent Prediction:

\begin{enumerate}
\item Form the sigma points as follows:
$\mathcal{X}^{(i)}_{k-1} = \vm_{k-1} + \sqrt{\MP_{k-1}} \, \vxi_i, i=1,\ldots,N$.

\item Propagate the sigma points through the dynamic model:
    $\hat{\mathcal{X}}^{(i)}_{k} = \vf(\mathcal{X}^{(i)}_{k-1}),
    i = 1,\ldots,N$.

\item Compute the predicted mean $\vm^-_{k}$ and the predicted
  covariance $\MP^-_{k}$:
  \begin{equation}
  \begin{split}
    \vm^-_{k} &= \sum_{i=1}^{N} W_i \, \hat{\mathcal{X}}^{(i)}_{k}, \\
    \MP^-_{k} &= \sum_{i=1}^{N} W_i \, 
        (\hat{\mathcal{X}}^{(i)}_{k} - \vm^-_{k}) \,
        (\hat{\mathcal{X}}^{(i)}_{k} - \vm^-_{k})^\T + \MQ_{k-1}.
  \end{split}
  \nonumber
  \end{equation}
\end{enumerate}

\noindent Update:

\begin{enumerate}
\item Form the sigma points:
  $\mathcal{X}^{-(i)}_{k} = \vm^-_{k} + 
        \sqrt{\MP^-_{k}} \, \vxi_i, 
   i=1,\ldots,N$.

\item Propagate sigma points through the measurement model:
 $\hat{\mathcal{Y}}^{(i)}_{k} = \vh(\mathcal{X}^{-(i)}_{k}),
    i = 1 \ldots N$.

\item Compute the predicted mean $\vmu_k$, the predicted covariance of the measurement $\MS_k$, and the cross-covariance of the state and the measurement $\MC_k$:
  \begin{equation}
  \begin{split}
    \vmu_{k}    &= \sum_{i=1}^{N} W_i \, \hat{\mathcal{Y}}^{(i)}_{k}, \\
    \MS_{k} &= \sum_{i=1}^{N}
        W_i \, (\hat{\mathcal{Y}}^{(i)}_{k} - \vmu_{k}) \,
        (\hat{\mathcal{Y}}^{(i)}_{k} - \vmu_{k})^\T + \MR_{k}, \\
    \MC_{k} &= \sum_{i=1}^{N} W_i \, 
        (\mathcal{X}^{-(i)}_{k} - \vm^-_{k}) \,
        (\hat{\mathcal{Y}}^{(i)}_{k} - \vmu_{k})^\T.
  \end{split}
\nonumber
  \end{equation}

\item Compute the filter gain $\MK_k$ and the filtered state mean
$\vm_k$ and covariance $\MP_k$, conditional on the measurement $\vy_k$:
\begin{equation}
\begin{split}
  \MK_k &= \MC_k \, \MS_k^{-1}, \\
  \vm_k &= \vm^-_k + \MK_k \, \left[ \vy_k - \vmu_k \right], \\
  \MP_k &= \MP^-_k - \MK_k \, \MS_k \, \MK_k^\T.
\end{split}
\nonumber
\end{equation}
\end{enumerate}
\end{algorithm}
Further following the line of thought in \cite{Sarkka:2013} we can formulate a sigma-point smoother using the unit sigma-points and weights from Algorithm~\ref{alg:gptrans}.
\begin{algorithm}[Gaussian process quadrature sigma-point RTS smoother]
The smoothing recursion is started from the filtering result of the last time step $k = T$, that is, $\vm^s_T = \vm_T$, $\MP^s_T = \MP_T$ and proceeded backwards for $k=T-1,T-2,\ldots,1$ as follows. 
\begin{enumerate}
\item Form the sigma points:
   $\mathcal{X}^{(i)}_{k} = \vm_{k} + \sqrt{\MP_{k}} \, \vxi_i,
   i=1,\ldots,N$.

\item Propagate the sigma points through the dynamic model:
  $\hat{\mathcal{X}}^{(i)}_{k+1} = \vf(\mathcal{X}^{(i)}_{k}),
    i = 1,\ldots,N$.

\item Compute the predicted mean $\vm^-_{k+1}$, the predicted
  covariance $\MP^-_{k+1}$, and the cross-covariance
  $\MD_{k+1}$:
  \begin{equation}
  \begin{split}
    \vm^-_{k+1} &= \sum_{i=1}^{N} W_i \, \hat{\mathcal{X}}^{(i)}_{k+1}, \\
    \MP^-_{k+1} &= \sum_{i=1}^{N} W_i \, 
     (\hat{\mathcal{X}}^{(i)}_{k+1} - \vm^-_{k+1}) \,
     (\hat{\mathcal{X}}^{(i)}_{k+1} - \vm^-_{k+1})^\T + \MQ_{k}, \\
    \MD_{k+1} &= \sum_{i=1}^{N} W_i \, 
     (\mathcal{X}^{(i)}_{k} - \vm_k) \,
     (\hat{\mathcal{X}}^{(i)}_{k+1} - \vm^-_{k+1})^\T.
  \end{split}
\nonumber
  \end{equation}
  
\item Compute the gain $\MG_k$, mean $\vm^s_k$ and covariance
  $\MP^s_k$ as follows:
\begin{equation}
\begin{split}
    \MG_k &= \MD_{k+1} \, [\MP^-_{k+1}]^{-1}, \\
  \vm^s_k &= \vm_k + \MG_k \,
    ( \vm^s_{k+1} - \vm^-_{k+1} ), \\
  \MP^s_k &= \MP_k + \MG_k \, 
    ( \MP^s_{k+1} - \MP^-_{k+1} ) \, \MG_k^\T.
\end{split}
\nonumber
\end{equation}
\end{enumerate}
\end{algorithm}
Note that we could cope with non-additive noises in the model by using augmented forms of the above filters and smoothers as in \cite{Sarkka:2013}. The fixed-point and fixed-lag smoothers can also be derived analogously as was done in the same reference.

\section{Selection of covariance functions and sigma-point locations}

The accuracy of the Gaussian process quadrature method and hence the accuracy of the filtering and smoothing methods using it is affected by
\begin{enumerate}
\item the covariance function $K(\vxi,\vxi')$ used and
\item the sigma-point locations $\vxi_i$.
 \end{enumerate}
Once both of the above are fixed, the weights are determined by Equation~\eqref{eq:gp_wi}. In this section we discuss certain useful choices of covariance functions as well as "optimal" choices of sigma-point locations for them. We also discuss the connection of the resulting methods with sigma-point methods such as unscented transforms and Gauss--Hermite quadratures.

\subsection{Squared exponential and minimum variance point sets}
In a machine learning context \cite{Rasmussen+Williams:2006} the default choice for a covariance function of a Gaussian process is the squared exponential covariance function in Equation~\eqref{eq:secov}. What makes it convenient in Gaussian process quadrature context is that the integral required for computing the weights in Equation~\eqref{eq:gp_wi} can be evaluated in closed form (cf. \cite{Minka:2000,Deisenroth:2012}). It turns out that the posterior variance can be computed in closed form as well which is useful because for a given set of sigma-points we can immediately compute the expected error in the integral approximation (assuming that the integrand is indeed a GP) -- this is possible because the variance does not depend on the observations at all.

One way to determine the sigma-point locations is to select them to minimize the posterior variance of the integral approximation \cite{OHagan:1991,Minka:2000}. In our case this corresponds to minization of the variance in Equation~\eqref{eq:gpquadvar1} with respect to the points $\vxi_{1:N}$. Although the minimization is not possible in closed form, with a moderate $N$ this optimization can be done numerically. Figure~\ref{fig:gp_sigmas} shows examples of minimum variance point sets optimized by using the Broyden--Fletcher--Goldfarb--Shanno (BFGS) algorithm \cite{Fletcher:1987}. 

\begin{figure}[!t]
\centering
\subfloat[5 points]{\includegraphics[width=0.2\textwidth]{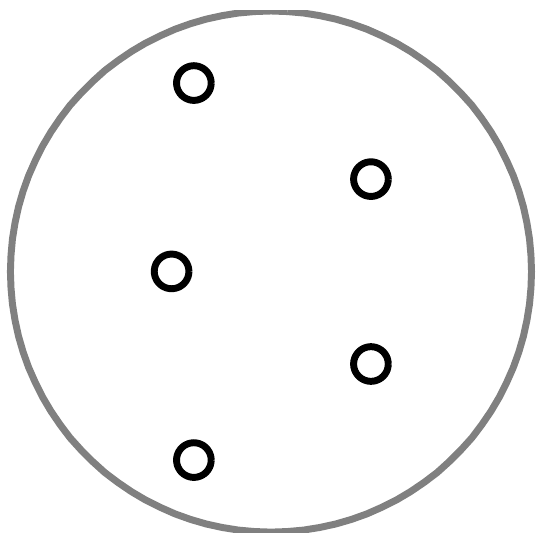}%
\label{fig:gp_sigmas_5}}
\hfil
\subfloat[10 points]{\includegraphics[width=0.2\textwidth]{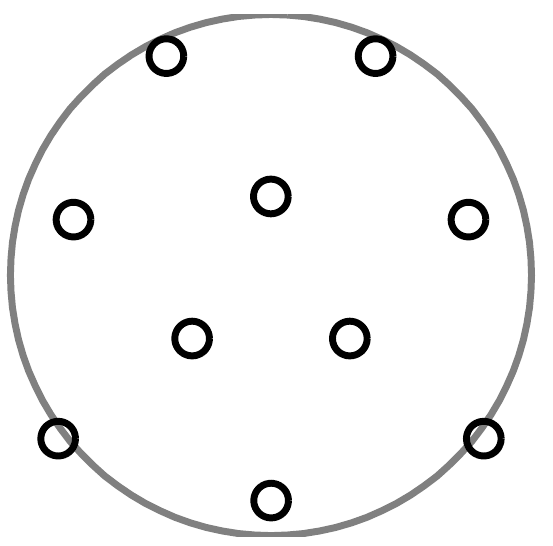}%
\label{fig:gp_sigmas_10}} \\
\subfloat[15 points]{\includegraphics[width=0.2\textwidth]{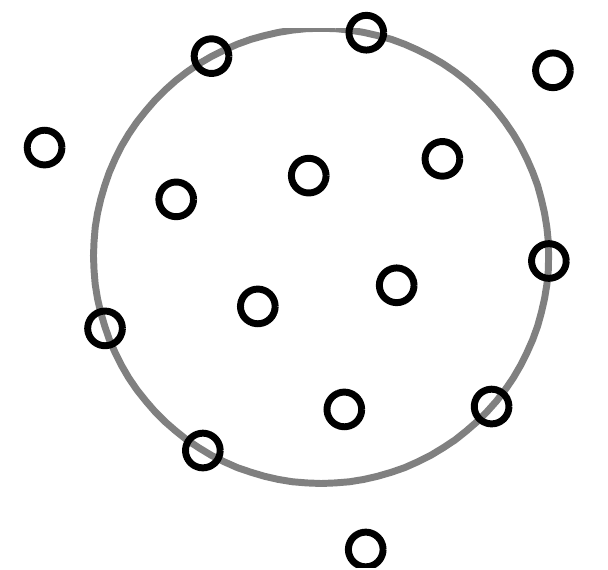}%
\label{fig:gp_sigmas_15}}
\hfil
\subfloat[20 points]{\includegraphics[width=0.2\textwidth]{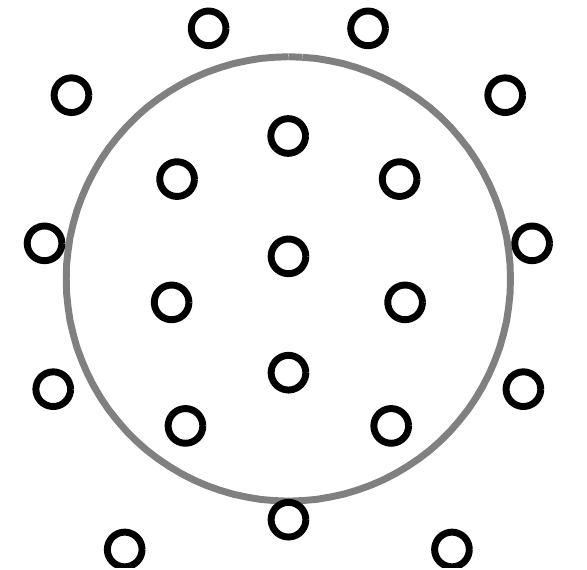}%
\label{ig:gp_sigmas_20}}
\caption{Minimum variance point sets for the squared exponential covariance function.}
\label{fig:gp_sigmas}
\end{figure}

The squared exponential covariance function is not the only possible choice for a covariance function. From the machine learning context we could, for example, choose a Mat\'ern covariance function or some of the scale-mixture-based covariance functions \cite{Rasmussen+Williams:2006}. In that case the weight integral \eqref{eq:gp_wi} becomes less trivial, but at least we always have a chance to precompute the weights using some (other) multivariate quadrature method. The sigma-point optimization could also be done similarly as for the squared exponential covariance function.

\subsection{UT and spherical cubature rules}

In addition to the squared exponential covariance funtion, another useful class of covariance function are polynomial covariance functions. They correspond to linear-in-parameters regression using polynomials as the regressor functions. It turns out that also for polynomial covariance functions we can compute the weights \eqref{eq:gp_wi} in closed form. What is even more interesting is that the Gaussian process quadratures reduce to classical numerical integration methods. In this section we show that with certain selections of symmetric evaluation points we get a classical family of spherically symmetric integration methods of McNamee and Stenger \cite{McNamee+Stenger:1967} of which the unscented transform \cite{Julier+Uhlmann:1995,Julier+Uhlmann+Durrant-Whyte:2000} can be (retrospectively) seen as a special case \cite{Julier+Uhlmann:2004}. More detailed information on the multivariate Hermite polynomials used below can be found in Appendix~\ref{sec:fh}.

\begin{theorem}[UT covariance function] \label{the:ut3_cov}
Assume that
\begin{equation}
  K(\vxi,\vxi') = \sum_{q=0}^3 \sum_{|J| = q} \sum_{p=0}^3  \sum_{|I|=p}
\frac{1}{\mathcal{I}! \, \mathcal{J}!}  \lambda_{\mathcal{I},\mathcal{J}} \, H_\mathcal{I}(\vxi) \, H_\mathcal{J}(\vxi')
\label{eq:ut3_cov}
\end{equation}
where $\lambda_{\mathcal{I},\mathcal{J}}$'s form a positive definite covariance matrix and $H_\mathcal{I}(\vxi)$ are multivariate Hermite polynomials (see Appendix~\ref{sec:fh}). If we now select the evaluation points as in UT \eqref{eq:utwxi}, then the GPQ weights $W_i$ become the UT weights. Furthermore, the posterior variance of the integral approximation is exactly zero.
\end{theorem}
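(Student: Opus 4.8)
The plan is to reduce everything to linear algebra in the finite-dimensional feature space spanned by the Hermite polynomials $\{H_{\mathcal I} : |\mathcal I|\le 3\}$ and then invoke the third-order exactness of the unscented transform; throughout I set $\sigma^2=0$ as assumed. First I would collect the $N=2n+1$ unit sigma-points into a design matrix $\MH$ with entries $\MH_{i,\mathcal I}=H_{\mathcal I}(\vxi_i)$, and write $\widetilde{\boldsymbol\Lambda}_{\mathcal I,\mathcal J}=\lambda_{\mathcal I,\mathcal J}/(\mathcal I!\,\mathcal J!)$, so that \eqref{eq:ut3_cov} reads $K(\vxi,\vxi')=\sum_{\mathcal I,\mathcal J}\widetilde{\boldsymbol\Lambda}_{\mathcal I,\mathcal J}\,H_{\mathcal I}(\vxi)\,H_{\mathcal J}(\vxi')$ and hence $\MK=\MH\,\widetilde{\boldsymbol\Lambda}\,\MH^\T$. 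The analytic input is the Gaussian orthogonality of the multivariate Hermite polynomials (Appendix~\ref{sec:fh}), namely $\int H_{\mathcal I}(\vxi)\,\N(\vxi\mid\mathbf 0,\MI)\,\diff\vxi=\delta_{\mathcal I,\mathbf 0}$, the indicator that $\mathcal I$ is the zero multi-index. Letting $\boldsymbol\delta$ be the coordinate vector selecting the constant index $\mathbf 0$, this gives $\vb:=\int \vk(\vxi)\,\N(\vxi\mid\mathbf 0,\MI)\,\diff\vxi=\MH\,\widetilde{\boldsymbol\Lambda}\,\boldsymbol\delta$ and $\iint K(\vxi,\vxi')\,\N(\vxi\mid\mathbf 0,\MI)\,\N(\vxi'\mid\mathbf 0,\MI)\,\diff\vxi\,\diff\vxi'=\boldsymbol\delta^\T\widetilde{\boldsymbol\Lambda}\,\boldsymbol\delta$. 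Thus from \eqref{eq:gp_wi} the weight vector is $\mathbf W=\MK^{-1}\vb$, and from \eqref{eq:gpquadvar1} the variance is $V=\boldsymbol\delta^\T\widetilde{\boldsymbol\Lambda}\,\boldsymbol\delta-\vb^\T\MK^{-1}\vb$.

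The second ingredient is the single identity $\MH^\T\mathbf W_{\mathrm{UT}}=\boldsymbol\delta$, where $\mathbf W_{\mathrm{UT}}$ is the vector of UT weights in \eqref{eq:utwxi}. Componentwise this reads $\sum_i W_i^{\mathrm{UT}}H_{\mathcal I}(\vxi_i)=\delta_{\mathcal I,\mathbf 0}=\int H_{\mathcal I}(\vxi)\,\N(\vxi\mid\mathbf 0,\MI)\,\diff\vxi$ for every $|\mathcal I|\le 3$, i.e.\ the UT integrates all Hermite polynomials (equivalently all monomials) of total degree at most three exactly. I would verify this directly from \eqref{eq:utwxi}: the weights sum to one (degree $0$), odd monomials vanish by the $\pm$ symmetry of the axis points (degrees $1$ and $3$), the pure second moments satisfy $\sum_i W_i^{\mathrm{UT}}\xi_{i,j}^2=1$, and mixed second moments vanish because every nonzero sigma-point lies on a single coordinate axis.

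Granting these, both conclusions are pure algebra. Since $\MH^\T\mathbf W_{\mathrm{UT}}=\boldsymbol\delta$ we get $\vb=\MH\widetilde{\boldsymbol\Lambda}\boldsymbol\delta=\MH\widetilde{\boldsymbol\Lambda}\MH^\T\mathbf W_{\mathrm{UT}}=\MK\,\mathbf W_{\mathrm{UT}}$, so $\mathbf W=\MK^{-1}\MK\,\mathbf W_{\mathrm{UT}}=\mathbf W_{\mathrm{UT}}$, which is the first claim. For the variance, the same identity gives $\vb^\T\MK^{-1}\vb=\boldsymbol\delta^\T\widetilde{\boldsymbol\Lambda}\MH^\T\,(\MK^{-1}\MH\widetilde{\boldsymbol\Lambda}\boldsymbol\delta)=\boldsymbol\delta^\T\widetilde{\boldsymbol\Lambda}\MH^\T\mathbf W_{\mathrm{UT}}=\boldsymbol\delta^\T\widetilde{\boldsymbol\Lambda}\boldsymbol\delta$, whence $V=\boldsymbol\delta^\T\widetilde{\boldsymbol\Lambda}\boldsymbol\delta-\boldsymbol\delta^\T\widetilde{\boldsymbol\Lambda}\boldsymbol\delta=0$.

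The main obstacle is justifying that $\MK$ is invertible, which is what makes all the steps above well defined. Because $[\lambda_{\mathcal I,\mathcal J}]$ is positive definite and the factorial rescaling is a positive diagonal congruence, $\widetilde{\boldsymbol\Lambda}\succ 0$; hence $\MK=\MH\widetilde{\boldsymbol\Lambda}\MH^\T$ is invertible if and only if $\MH$ has full row rank $N=2n+1$. I would establish this rank condition from the geometry of the UT points: if $\sum_i c_i H_{\mathcal I}(\vxi_i)=0$ for all $|\mathcal I|\le 3$, then $\sum_i c_i\,p(\vxi_i)=0$ for every polynomial $p$ of degree $\le 3$; testing $p=\xi_j$ and $p=\xi_j^2$ forces $c_j=c_{n+j}$ and $c_j+c_{n+j}=0$ for each axis $j$ (using $n+\kappa\neq 0$), hence $c_j=c_{n+j}=0$, and finally $p=1$ yields $c_0=0$. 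Thus $\MH$ has full row rank, $\MK^{-1}$ exists, and the argument is complete.
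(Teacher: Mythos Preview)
Your argument is correct and rests on the same two ingredients the paper uses: the finite Hermite feature-space representation of the covariance \eqref{eq:ut3_cov} and the third-order exactness of the unscented transform. The paper's proof is a short sketch that passes through the equivalent parametric model $g_K(\vxi)=\sum_{|\mathcal I|\le 3}\tfrac{1}{\mathcal I!}c_{\mathcal I}H_{\mathcal I}(\vxi)$ and then appeals to \emph{uniqueness} of the UT weights as the exact rule for all such polynomials; you instead carry out the linear algebra explicitly, verifying that $\mathbf W_{\mathrm{UT}}$ satisfies the normal equation $\MK\mathbf W=\vb$ and computing the variance directly. The substantive addition in your version is the careful justification that $\MK=\MH\widetilde{\boldsymbol\Lambda}\MH^\T$ is invertible via full row rank of $\MH$, a point the paper leaves implicit in its uniqueness claim; your separating-polynomial argument for the UT nodes is exactly what is needed to close that gap.
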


\begin{proof}
The prior $g_K \sim \mathrm{GP}(0,K(\vxi,\vxi'))$ with the above covariance is equivalent to a parametric model of the form
\begin{equation}
  g_K(\vxi) = \sum_{p=0}^{3} \sum_{|\mathcal{I}| = p} 
  \frac{1}{\mathcal{I}!} c_{\mathcal{I}} \, H_{\mathcal{I}}(\vxi),
\label{eq:ut3_gf}
\end{equation}
where $c_{\mathcal{I}}$ are zero mean Gaussian random variables with the covariances $\lambda_{\mathcal{I},\mathcal{J}} = \E\left[c_{\mathcal{I}} \, c_{\mathcal{J}}\right]$. When the joint covariance matrix $\Lambda = [\lambda_{\mathcal{I},\mathcal{J}}]$ is non-singular, the posterior covariance of the integral being zero is equivalent to that the integral rule is exact for all functions of the form \eqref{eq:ut3_gf} with arbitrary coefficients. Clearly with the UT evaluations points, the UT weights are the unique ones that have this property (see, e.g., \cite{Sarkka:2013}) and hence the result follows. 
\end{proof}

Note that the above result also covers the cubature transform (CT), that is, the moment matching rule used in the cubature Kalman filter (CKF) and the smoother, because the transform is a special case of UT \cite{Sarkka:2013}.

\begin{theorem}[Higher order UT covariance function] \label{the:utn_cov}
Assume that
\begin{equation}
  K(\vxi,\vxi') = \sum_{q=0}^P \sum_{|J| = q} \sum_{p=0}^P \sum_{|I|=p}
  \frac{1}{\mathcal{I}! \, \mathcal{J}!} \lambda_{\mathcal{I},\mathcal{J}} \, H_\mathcal{I}(\vxi) \, H_\mathcal{J}(\vxi')
\label{eq:utn_cov}
\end{equation}
If we select the evaluation points according to order $P = 5,7,9,\ldots$ rules in \cite{McNamee+Stenger:1967}, we obtain the higher order integration formulas in \cite{McNamee+Stenger:1967}, which are often referred to as fifth order, seventh order, ninth order and higher order UTs.
\end{theorem}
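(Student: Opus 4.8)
The plan is to run the same argument as in the proof of Theorem~\ref{the:ut3_cov}, with the cut-off degree $3$ replaced by the general degree $P$. First I would note that the Gaussian process prior $g_K \sim \mathrm{GP}(0,K(\vxi,\vxi'))$ with the covariance~\eqref{eq:utn_cov} is, exactly as in~\eqref{eq:ut3_gf}, the finite parametric model
\begin{equation*}
  g_K(\vxi) = \sum_{p=0}^{P} \sum_{|\mathcal{I}| = p}
  \frac{1}{\mathcal{I}!} \, c_{\mathcal{I}} \, H_{\mathcal{I}}(\vxi),
\end{equation*}
where $c_{\mathcal{I}}$ are zero-mean jointly Gaussian with $\lambda_{\mathcal{I},\mathcal{J}} = \E[c_{\mathcal{I}} \, c_{\mathcal{J}}]$. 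Since the multivariate Hermite polynomials of total degree at most $P$ form a basis of the space $\mathcal{P}_P$ of all multivariate polynomials of total degree at most $P$, the sample paths of this process are precisely the elements of $\mathcal{P}_P$, and the reproducing-kernel space associated with~\eqref{eq:utn_cov} is $\mathcal{P}_P$ itself.

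Next I would reuse the equivalence established in Theorem~\ref{the:ut3_cov}: when the coefficient covariance matrix $\Lambda = [\lambda_{\mathcal{I},\mathcal{J}}]$ is nonsingular, the posterior variance~\eqref{eq:gpquadvar1} vanishes if and only if the quadrature integrates every element of $\mathcal{P}_P$ exactly, that is, if and only if it is exact for all multivariate monomials up to total degree $P$. The order-$P$ rules of McNamee and Stenger~\cite{McNamee+Stenger:1967} are constructed precisely so that their symmetric point sets together with their tabulated weights reproduce every such moment; hence, at the McNamee--Stenger points, exact integration of $\mathcal{P}_P$ is achievable.

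It then remains to identify the Gaussian process quadrature weights~\eqref{eq:gp_wi} with the McNamee--Stenger weights. Because the kernel~\eqref{eq:utn_cov} has finite rank and the McNamee--Stenger point set $\{\vxi_i\}$ is unisolvent for $\mathcal{P}_P$ (the $N$ point-evaluation functionals are linearly independent on $\mathcal{P}_P$, which holds since $N$ does not exceed $\dim \mathcal{P}_P$ and the symmetric shells are distinct), the Gram matrix $\MK$ is positive definite, so the noise-free weights are well defined. The Gaussian process posterior mean is then some interpolant $\hat{g} \in \mathcal{P}_P$ of the data $\vo$, and by the exactness of the McNamee--Stenger rule on $\mathcal{P}_P$ its integral equals $\sum_i W_i^{\mathrm{MS}} \, \hat{g}(\vxi_i) = \sum_i W_i^{\mathrm{MS}} \, o_i$. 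Since this holds for every data vector $\vo \in \mathbb{R}^N$ (each of which is interpolable by unisolvence) and the quadrature~\eqref{eq:gpquad0} is linear in $\vo$ with weights independent of $\vo$, the Gaussian process weights must coincide with the McNamee--Stenger weights, and the posterior variance is zero.

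The step I expect to be the \emph{main obstacle} is the identification of the weights, and specifically the verification that at each McNamee--Stenger configuration the exactness conditions pin the weights down uniquely. For the third-order case of Theorem~\ref{the:ut3_cov} this uniqueness was immediate, but for the fifth-, seventh-, and higher-order rules one must check that the symmetric-shell structure makes the even-order moment-matching equations a determined (full column rank) system; I would reduce the conditions to one unknown per symmetric orbit, use that odd moments vanish by symmetry on both sides, and invoke the McNamee--Stenger construction itself, which fixes the shell weights by exactly these constraints. A secondary technical point is the treatment of $\sigma^2 = 0$ in~\eqref{eq:gp_wi}: should a particular rule use more points than $\dim \mathcal{P}_P$, or otherwise render $\MK$ singular, the inverse must be read as the limit $\sigma^2 \to 0^+$, and one has to confirm that this limit still returns the exact-integration weights.
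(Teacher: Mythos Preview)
Your proposal is correct and follows essentially the same approach as the paper: the paper's own proof is the single sentence ``The result follows analogously to the 3rd order case above,'' and you have carried out precisely that analogy, with considerably more care than the paper itself supplies. In particular, your identification of the weight-uniqueness (unisolvence) step as the place where the higher-order case needs a genuine check goes beyond what the paper spells out, but it does not diverge from the paper's intended argument.
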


\begin{proof}
The result follows analogously to the 3rd order case above.
\end{proof}

\begin{figure}[!t]
\centering
\subfloat[UT-3]{\includegraphics[width=0.2\textwidth]{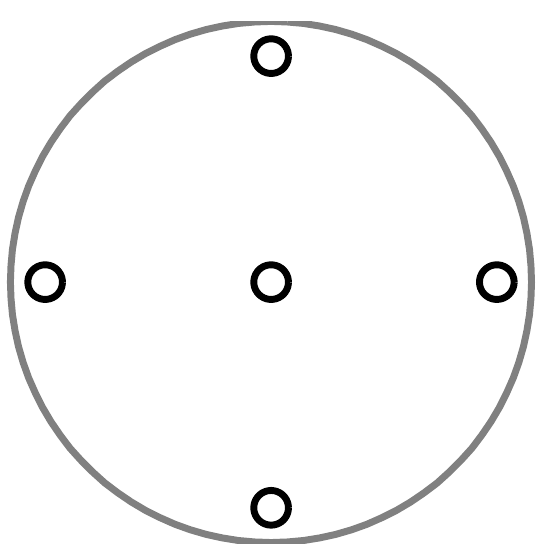}%
\label{fig:ut3_sigmas}}
\hfil
\subfloat[UT-5]{\includegraphics[width=0.2\textwidth]{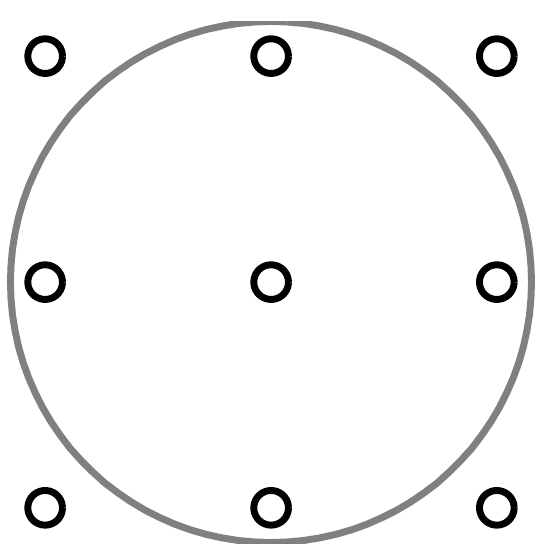}%
\label{fig:ut5_sigmas}}
\caption{Unscented transform point sets.}
\label{fig:ut_sigmas}
\end{figure}

\begin{example}[Derivation of UT weights from GPQ]
  Let $\vxi \in \mathbb{R}^2$ and consider the GPQ with UT \eqref{eq:utwxi} sigma-points and the covariance function \eqref{eq:ut3_cov}. With $\sigma = 0$ and $\lambda_{\mathcal{I},\mathcal{J}} = \delta_{\mathcal{I},\mathcal{J}}$ we then obtain the covariance matrix in \eqref{eq:gpq_cov}.
\begin{figure*}[htb]
\begin{equation}
\begin{split}
  \MK &= \begin{pmatrix} \frac{3}{2} & 1 - \frac{\kappa}{4} & 1 - \frac{\kappa}{4} & 1 - \frac{\kappa}{4} & 1 - \frac{\kappa}{4}\\ 1 - \frac{\kappa}{4} & \frac{{\kappa}^3}{36} + \frac{{\kappa}^2}{4} + \frac{13\, \kappa}{6} + \frac{91}{18} & \frac{1}{2} - \frac{\kappa}{2} &  - \frac{{\kappa}^3}{36} + \frac{{\kappa}^2}{4} - \frac{7\, \kappa}{6} - \frac{37}{18} & \frac{1}{2} - \frac{\kappa}{2}\\ 1 - \frac{\kappa}{4} & \frac{1}{2} - \frac{\kappa}{2} & \frac{{\kappa}^3}{36} + \frac{{\kappa}^2}{4} + \frac{13\, \kappa}{6} + \frac{91}{18} & \frac{1}{2} - \frac{\kappa}{2} &  - \frac{{\kappa}^3}{36} + \frac{{\kappa}^2}{4} - \frac{7\, \kappa}{6} - \frac{37}{18}\\ 1 - \frac{\kappa}{4} &  - \frac{{\kappa}^3}{36} + \frac{{\kappa}^2}{4} - \frac{7\, \kappa}{6} - \frac{37}{18} & \frac{1}{2} - \frac{\kappa}{2} & \frac{{\kappa}^3}{36} + \frac{{\kappa}^2}{4} + \frac{13\, \kappa}{6} + \frac{91}{18} & \frac{1}{2} - \frac{\kappa}{2}\\ 1 - \frac{\kappa}{4} & \frac{1}{2} - \frac{\kappa}{2} &  - \frac{{\kappa}^3}{36} + \frac{{\kappa}^2}{4} - \frac{7\, \kappa}{6} - \frac{37}{18} & \frac{1}{2} - \frac{\kappa}{2} & \frac{{\kappa}^3}{36} + \frac{{\kappa}^2}{4} + \frac{13\, \kappa}{6} + \frac{91}{18} \end{pmatrix}
\end{split}
\label{eq:gpq_cov}
\end{equation}
\end{figure*}
It also turns out that 
\begin{equation}
\begin{split}
  \int \vk^\T(\vxi) \, \mathrm{N}(\vxi \mid \mathbf{0},\MI) \, \diff\vxi = \begin{pmatrix} 1 & \cdots & 1 \end{pmatrix}
\end{split}
\end{equation}
and finally
\begin{equation}
\begin{split}
  W_{0:4} = \begin{pmatrix} \frac{\kappa}{\kappa + 2} & \frac{1}{2\, \left(\kappa + 2\right)} & \frac{1}{2\, \left(\kappa + 2\right)} & \frac{1}{2\, \left(\kappa + 2\right)} & \frac{1}{2\, \left(\kappa + 2\right)} \end{pmatrix}^{\T},
\end{split}
\end{equation}
which are indeed the UT weights.
\end{example}

\subsection{Multivariate Gauss--Hermite point sets}

The multivariate Gauss--Hermite point sets (see, e.g., \cite{Ito+Xiong:2000,Sarkka:2013}) of order $P$ are exact for monomials of of the form $x_1^{p_1} \times \cdots \times x_n^{p_n}$, where $p_i \le 2P-1$ for $i=1,\ldots,n$. This implies the following covariance function class.

\begin{theorem}[Gauss--Hermite covariance function]
Assume that
\begin{equation}
  K(\vxi,\vxi') = \sum_{\max \mathcal{J}  \le 2P-1} \sum_{\max \mathcal{I}  \le 2P-1}
  \frac{1}{\mathcal{I}! \, \mathcal{J}!}  \lambda_{\mathcal{I},\mathcal{J}} \, H_\mathcal{I}(\vxi) \, H_\mathcal{J}(\vxi')
\end{equation}
where $\lambda_{\mathcal{I},\mathcal{J}}$'s form a positive definite covariance matrix and $H_\mathcal{I}(\vxi)$ are multivariate Hermite polynomials. If we now select the evaluation points to form a cartesian product of roots of Hermite polynomials of order $P$, then the GPQ weights $W_i$ become the multivariate Gauss--Hermite quadrature weights. The posterior variance of the integral approximation is again exactly zero.
\end{theorem}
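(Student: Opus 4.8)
The plan is to mirror the proof of Theorem~\ref{the:ut3_cov}, adapting the polynomial space and the exactness argument to the tensor-product structure of the Gauss--Hermite rule.

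First I would observe that, exactly as in the 3rd order case, the degenerate (finite-rank) kernel makes the prior $g_K \sim \mathrm{GP}(0, K(\vxi,\vxi'))$ equivalent to the parametric model
\[
  g_K(\vxi) = \sum_{\max \mathcal{I} \le 2P-1} \frac{1}{\mathcal{I}!} \, c_\mathcal{I} \, H_\mathcal{I}(\vxi),
\]
where the $c_\mathcal{I}$ are zero-mean Gaussian with $\E[c_\mathcal{I} \, c_\mathcal{J}] = \lambda_{\mathcal{I},\mathcal{J}}$ and $\Lambda = [\lambda_{\mathcal{I},\mathcal{J}}]$ positive definite. Because the multivariate Hermite polynomials factorise as $H_\mathcal{I}(\vxi) = \prod_{k} H_{i_k}(\xi_k)$ (Appendix~\ref{sec:fh}), the functions $\{H_\mathcal{I}: \max \mathcal{I} \le 2P-1\}$ span precisely the tensor-product polynomial space $\mathcal{V} = \mathrm{span}\{\xi_1^{p_1}\cdots\xi_n^{p_n}: p_k \le 2P-1\}$. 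Since $\Lambda$ is nonsingular, the same argument as before shows that the posterior variance of the integral vanishes if and only if the quadrature rule is exact for every function in $\mathcal{V}$.

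Second I would verify that the Cartesian-product Gauss--Hermite rule is exact on $\mathcal{V}$, which immediately yields the zero-variance claim. Writing the product nodes as $\vxi_{i} = (\xi^{(i_1)}, \ldots, \xi^{(i_n)})$ with tensor-product weights $\prod_k w^{(i_k)}$, and using that $\N(\vxi \mid \mathbf{0},\MI)$ and each basis monomial both factorise over coordinates, the $n$-dimensional exactness reduces to the univariate statement that the $P$-point Gauss--Hermite rule integrates polynomials of degree $\le 2P-1$ exactly. This is the defining property of the univariate nodes (roots of $H_P$) and their weights.

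Third, to conclude that the GPQ weights $W_i$ of Equation~\eqref{eq:gp_wi} actually coincide with the Gauss--Hermite weights rather than being some other exact rule, I would argue uniqueness. Since the posterior variance is zero, the GPQ rule is exact on all of $\mathcal{V}$, and in particular on the sub-basis $\{H_\mathcal{I}: \max \mathcal{I} \le P-1\}$, of which there are exactly $P^n$, equal to the number of nodes. The $P^n \times P^n$ evaluation matrix $[H_\mathcal{I}(\vxi_i)]$ on this sub-basis is a Kronecker product of the univariate evaluation matrices $[H_j(\xi^{(i)})]$ with $i,j \in \{0,\ldots,P-1\}$, each of which is nonsingular because the Hermite polynomials of degree $\le P-1$ form a basis and the $P$ nodes are distinct. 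Hence the exactness conditions on this sub-basis already determine the weights uniquely, and since the Gauss--Hermite weights satisfy them as well, the two sets of weights must agree. The main obstacle is precisely the mismatch between $\dim \mathcal{V} = (2P)^n$ and the node count $P^n$: exactness on all of $\mathcal{V}$ overdetermines the weights, so uniqueness cannot be read off by counting on $\mathcal{V}$ directly. The resolution is that exactness already holds on the smaller $P^n$-dimensional sub-basis where the evaluation (Vandermonde/Kronecker) matrix is square and invertible, and this is the only genuinely non-routine point, hinging on the tensor-product structure of both the nodes and the Hermite basis.
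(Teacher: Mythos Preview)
Your proposal is correct and follows essentially the same approach as the paper: the paper's proof is a one-line appeal to ``the equivalence of the polynomial approximations and polynomial covariance functions together with the uniqueness of the Gauss--Hermite rule for exact integration of this same function class,'' and you have simply unpacked both halves of that sentence. Your third paragraph, where you isolate the $P^n$-dimensional sub-basis and use the Kronecker structure to get an invertible evaluation matrix, supplies the uniqueness justification that the paper leaves implicit; this is a welcome clarification rather than a different route.
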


\begin{proof}
  Again the result follows from the equivalence of the polynomial approximations and polynomial covariance functions together with the uniqueness of the Gauss--Hermite rule for exact integration of this same function class.
\end{proof}

\begin{figure}[!t]
\centering
\subfloat[GH-4]{\includegraphics[width=0.2\textwidth]{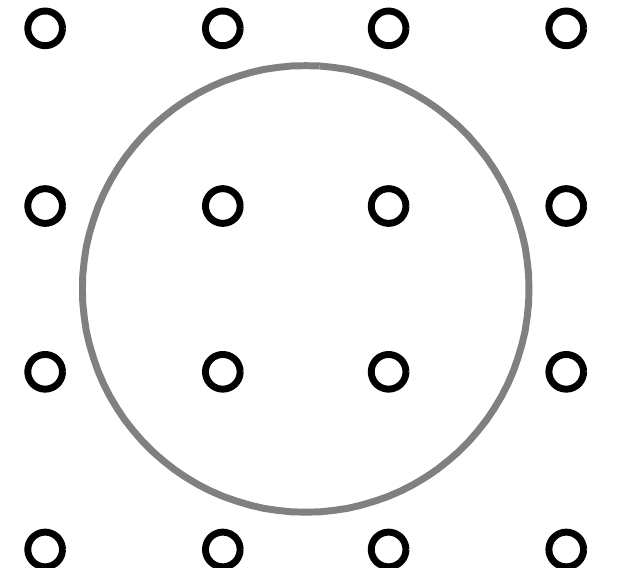}%
\label{fig:ut3_sigmas}}
\hfil
\subfloat[GH-5]{\includegraphics[width=0.2\textwidth]{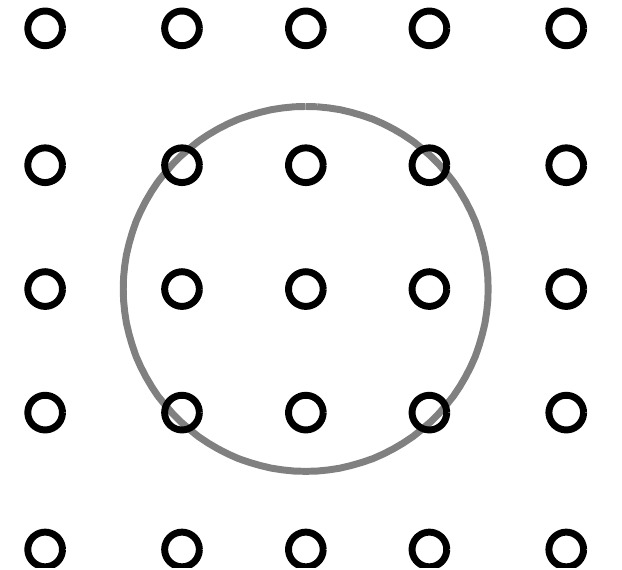}%
\label{fig:ut5_sigmas}}
\caption{Gauss--Hermite point sets.}
\label{fig:gh_sigmas}
\end{figure}

Even when we are using polynomial covariance functions, we are by no means restricted to using the specific points sets corresponding to the classical integration rules. However, obviously, given the order of the polynomial kernel and number of sigma-points they are also minimum variance points sets and hence good choices also in average -- provided that the integrand is indeed a polynomial. In any case, for an arbitrary set of sigma-points we can use Equation~\eqref{eq:gp_wi} to give the corresponding minimum variance weights. 

\subsection{Connection between squared exponential and polynomial Gaussian process quadratures} \label{sec:minka}
As discussed in \cite{Minka:2000}, the Gaussian process quadrature with squared exponential covariance function also has a strong connection with classical quadrature methods. This is because we can consider a set of damped polynomial basis functions of the form $\phi_i(\xi) = x^i \, \exp(-x^2 / (2\ell^2))$, which at least informally speaking can be seen to converge to a polynomial basis when $\ell \to \infty$. We can now consider a family of random functions (Gaussian processes) of the form
\begin{equation}
    g_\ell(x) =\sum_j c_j \, \phi_k(x) = \sum_j c_j \, x^j \, \exp\left(-\frac{x^2}{2\ell^2} \right),
\end{equation}
where $c_j \sim \N(0, (j! \, l^{2j})^{-1})$. The covariance function of this class is now given as
\begin{equation}
\begin{split}
  K(x,y)
  &= \sum_{i} \frac{1}{i! \, \ell^{2i}} \, x^i \, \exp\left(-\frac{x^2}{2\ell^2} \right)
  \, y^i \, \exp \left(-\frac{y^2}{2\ell^2} \right) \\
  &= \exp\left( \frac{x \, y}{\ell^2} \right) \, \exp\left(-\frac{x^2}{2\ell^2} \right)
  \, \exp \left(-\frac{y^2}{2\ell^2} \right) \\
  &= \exp\left(-\frac{(x - y)^2}{2\ell^2} \right),
\end{split}
\end{equation}
which is the squared exponential covariance function. 

Based on the above, Minka \cite{Minka:2000} argued (although did not formally prove) that GPQs with the squared exponential covariance functions should converge to the classical quadratures. This argument is indeed backed up by our analytical example in Example~\ref{ex:gpt_se} where this covergence indeed happens.

\subsection{Random and quasi-random point sets}
Recall that one way to approximate the expectation of $\vg(\vxi)$ over a Gaussian distribution $\N(\mathbf{0},\MI)$ is to use Monte Carlo integration. In that method we simply draw $N$ samples from the Gaussian distribution $\xi_i \sim \N(\mathbf{0},\MI)$ and use them as sigma-points. The classical Monte Carlo approximation to the integral would now correspond to setting $W_i = 1/N$. Alternatively, we could use these random points as sigma-points and evaluate their weights by Equation~\eqref{eq:gp_wi}. This leads to an approximation, which is sometimes called the Bayesian Monte Carlo approximation \cite{OHagan:1987,Ghahramani:2002}.

Instead of sampling from the normal distribution, we can also use quasi-random points sets such as the Hammersley point sets \cite{Hammersley:1960,Hammersley:1964}. These are points sets which are designed to give a smaller error in average than random points. The classical method would correspond to setting all weights to $W_i = 1/N$, but again, we can also use Equation~\eqref{eq:gp_wi} to evaluate the weights for the GP quadrature. This corresponds to a "Bayesian quasi Monte Carlo" approximation to the integral. Some examples of Hammersley point sets are shown in Figure~\ref{fig:h_sigmas}.

\begin{figure}[!t]
\centering
\subfloat[3 points]{\includegraphics[width=0.2\textwidth]{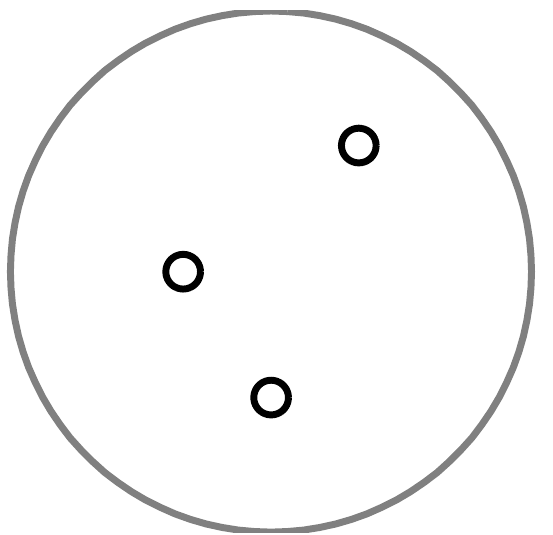}%
\label{fig:h3_sigmas}}
\hfil
\subfloat[7 points]{\includegraphics[width=0.2\textwidth]{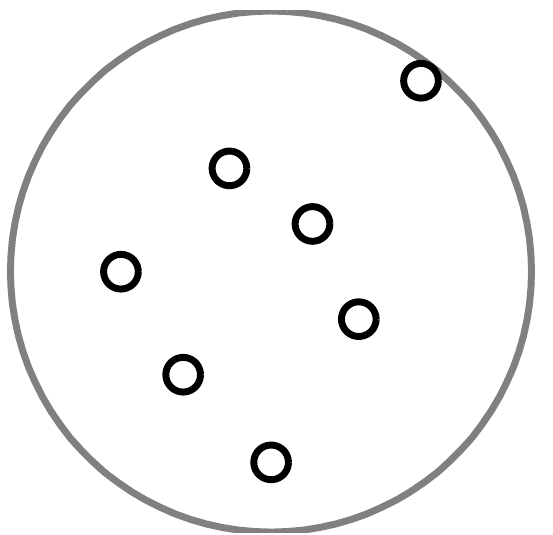}%
\label{fig:h7_sigmas}} \\
\subfloat[10 points]{\includegraphics[width=0.2\textwidth]{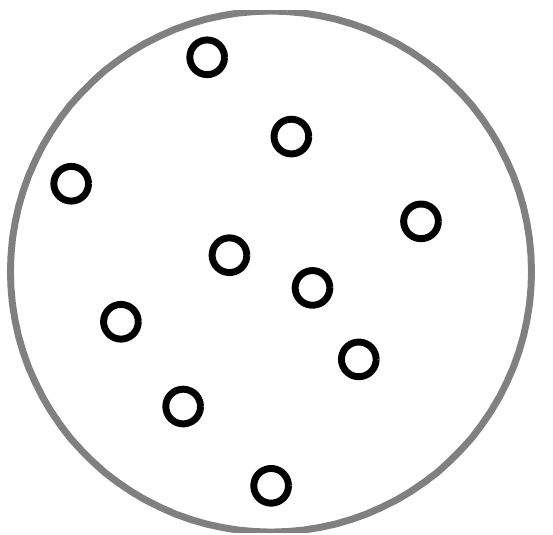}%
\label{fig:h10_sigmas}}
\hfil
\subfloat[20 points]{\includegraphics[width=0.2\textwidth]{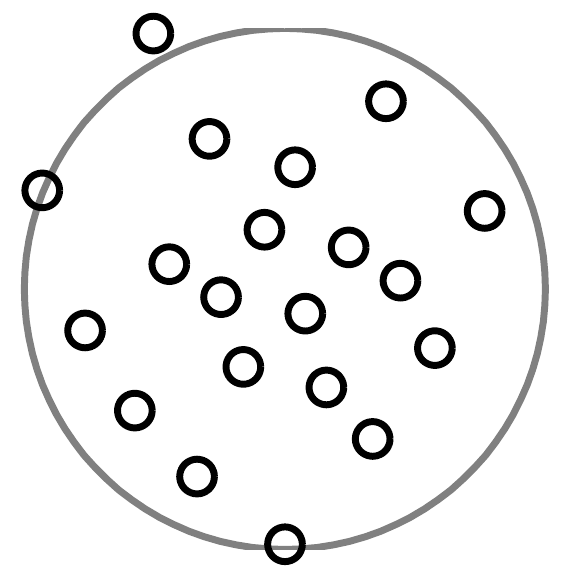}%
\label{ig:h20_sigmas}}
\caption{Hammersley point sets.}
\label{fig:h_sigmas}
\end{figure}

\section{Numerical Results}
\subsection{Covariance functions and regression implied by unscented transform}

\begin{figure}[!t]
\centering
\subfloat[UT-3]{\includegraphics[width=0.24\textwidth]{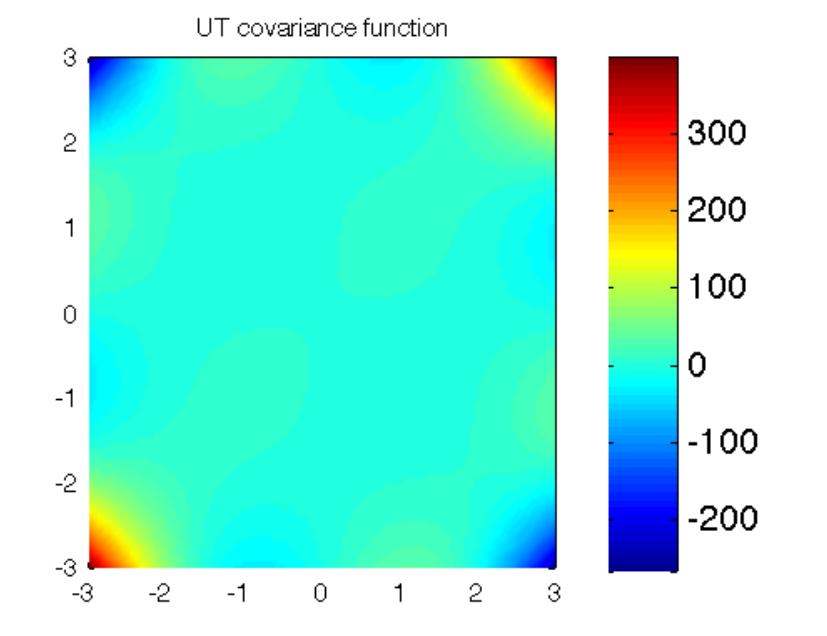}%
\label{fig:covmatrix_ut3}}
\hfil
\subfloat[UT-5]{\includegraphics[width=0.24\textwidth]{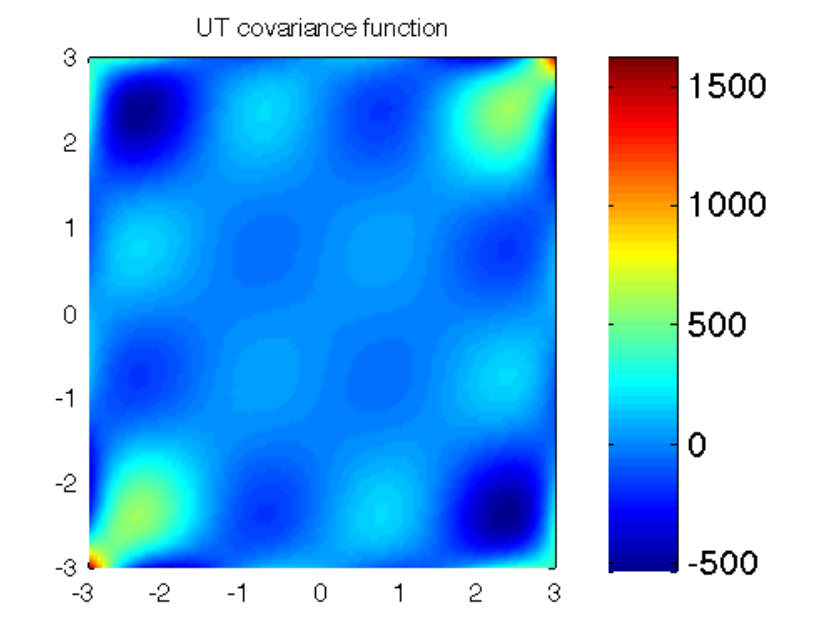}%
\label{fig:covmatrix_ut5}} \\
\subfloat[UT-7]{\includegraphics[width=0.24\textwidth]{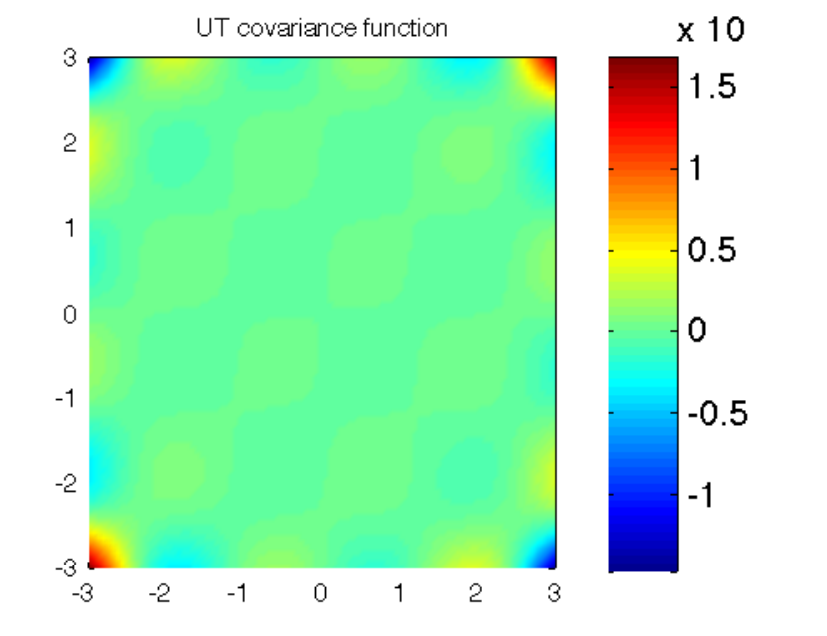}%
\label{fig:covmatrix_ut7}}
\hfil
\subfloat[SE]{\includegraphics[width=0.24\textwidth]{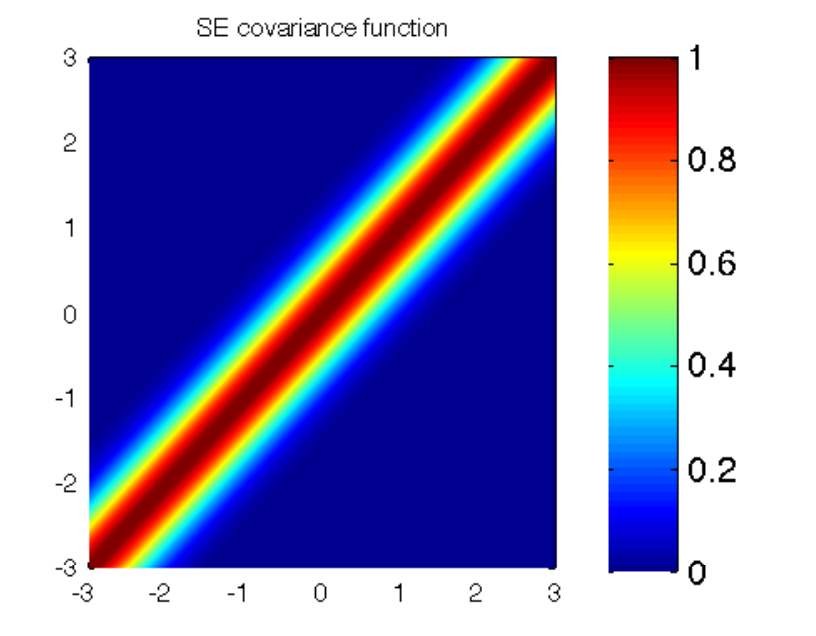}%
\label{ig:covmatrix_se}}
\caption{Covariance functions corresponding to different orders of unscented transforms (UT) and the squared exponential (SE) covariance function ($s = 1$, $\ell = 1/2$) for a single-input scalar-valued Gaussian process.}
\label{fig:covmatrix}
\end{figure}

\begin{figure}[!t]
\centering
\subfloat[UT-3]{\includegraphics[width=0.24\textwidth]{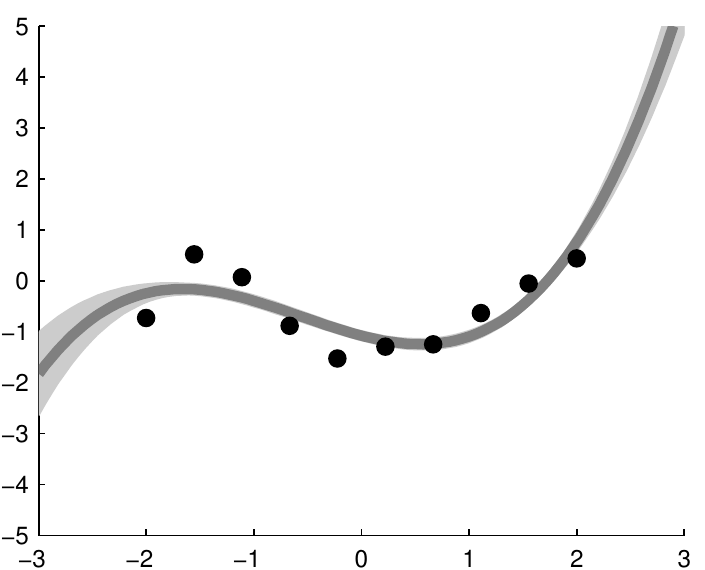}%
\label{fig:regr_ut3}}
\hfil
\subfloat[UT-5]{\includegraphics[width=0.24\textwidth]{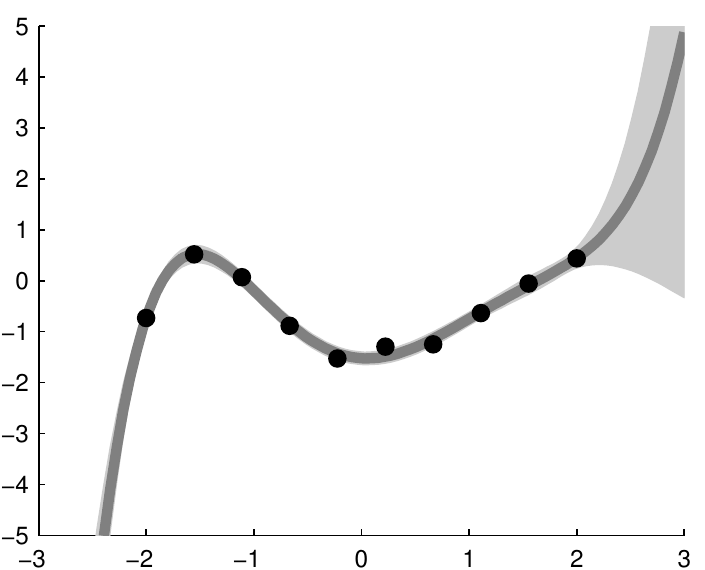}%
\label{fig:regr_ut5}} \\
\subfloat[UT-7]{\includegraphics[width=0.24\textwidth]{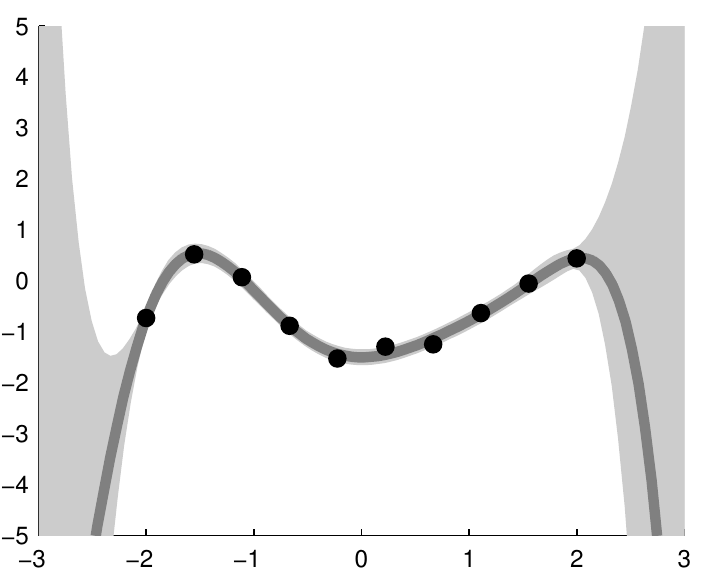}%
\label{fig:regr_ut7}}
\hfil
\subfloat[SE]{\includegraphics[width=0.24\textwidth]{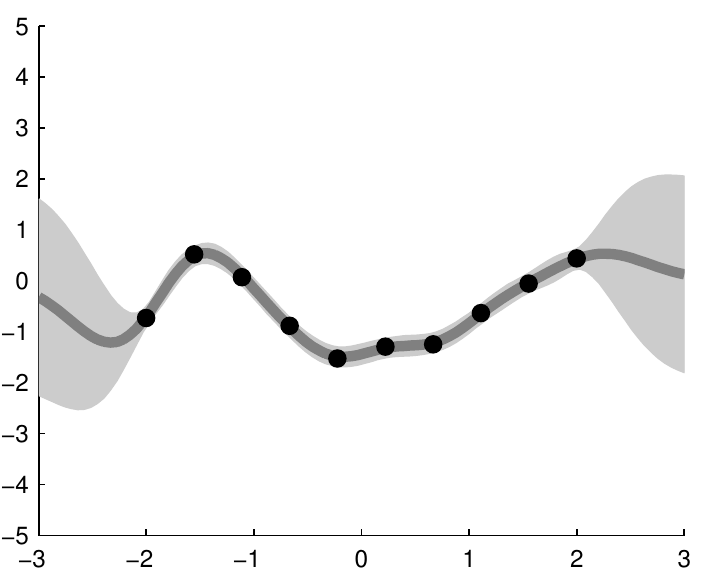}%
\label{ig:regr_se}}
\caption{Regression with covariance functions for UT and SE.}
\label{fig:regr}
\end{figure}

The unscented transform covariance functions of orders 3--7 (see Theorems \ref{the:ut3_cov} and \ref{the:utn_cov}) and the exponentiated quadratic (i.e., the squared exponential, SE) covariance function (Eq. \eqref{eq:secov}) are illustrated in Fig.~\ref{fig:covmatrix}. The polynomial nature of the unscented transform (UT) covariance function can be clearly seen in the figures -- the UT covariance function as such does not have such a simple local-correlation-interpretation as the SE covariance function has as the UT covariance functions simply blow up polynomially when moving away from the diagonal. 

The corresponding Gaussian process regression results on random data are illustrated in Fig.~\ref{fig:regr}. The polynomial nature of the unscented transform can be clearly seen in the figures. The Gaussian process prediction with the unscented transform covariance function has a clear polynomial shape as expected. Clearly the polynomial fit has less flexibility to explain the data than the exponentiated quadratic fit although the flexibility certainly grows with the polynomial (and thus UT) order.

\subsection{Illustrative high-dimensional example}
\begin{figure*}[!t]
\centering
\subfloat[Cubature for \eqref{eq:ara1}]{\includegraphics[width=0.33\textwidth]{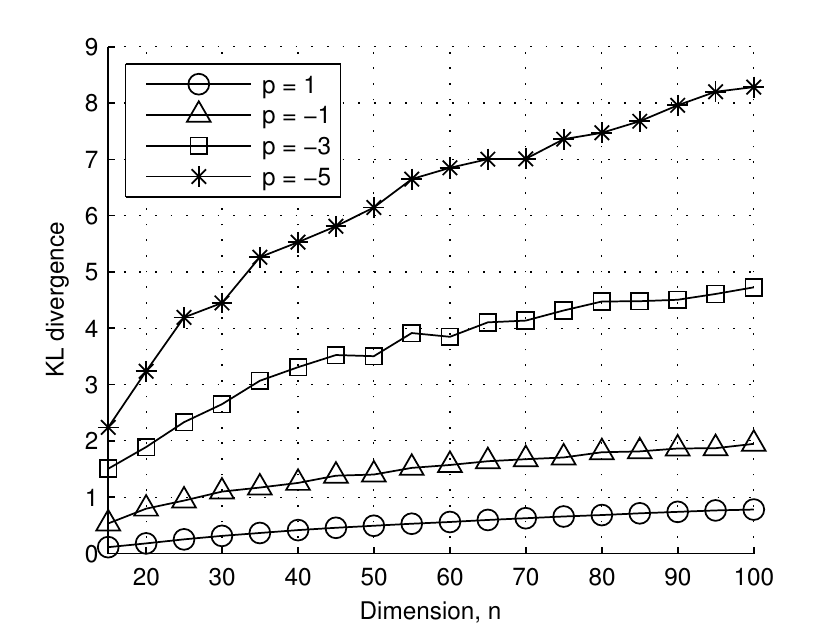}%
\label{fig:f1_ct}}
\hfil
\subfloat[GPQ-Cubature for \eqref{eq:ara1}]{\includegraphics[width=0.33\textwidth]{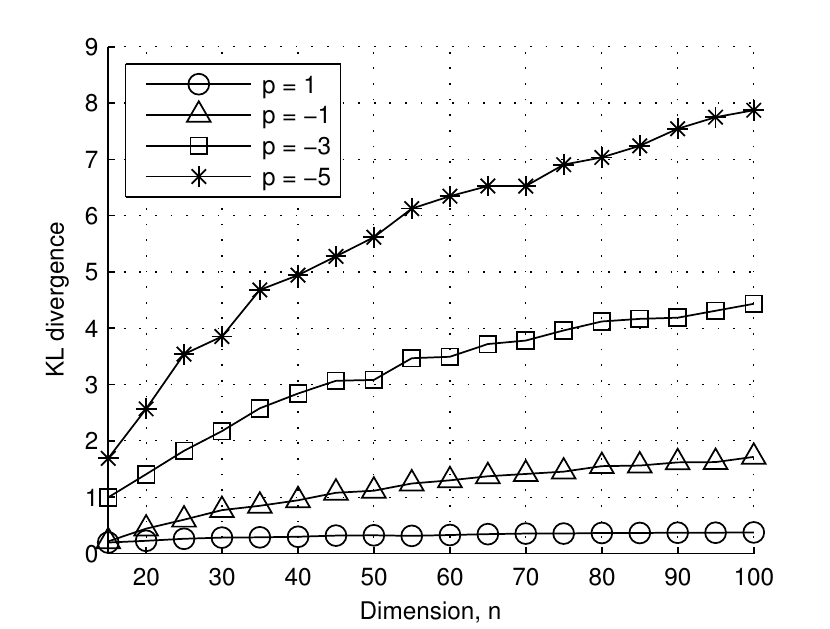}%
\label{fig:f1_gpc}}
\hfil
\subfloat[GPQ-Hammersley for \eqref{eq:ara1}]{\includegraphics[width=0.33\textwidth]{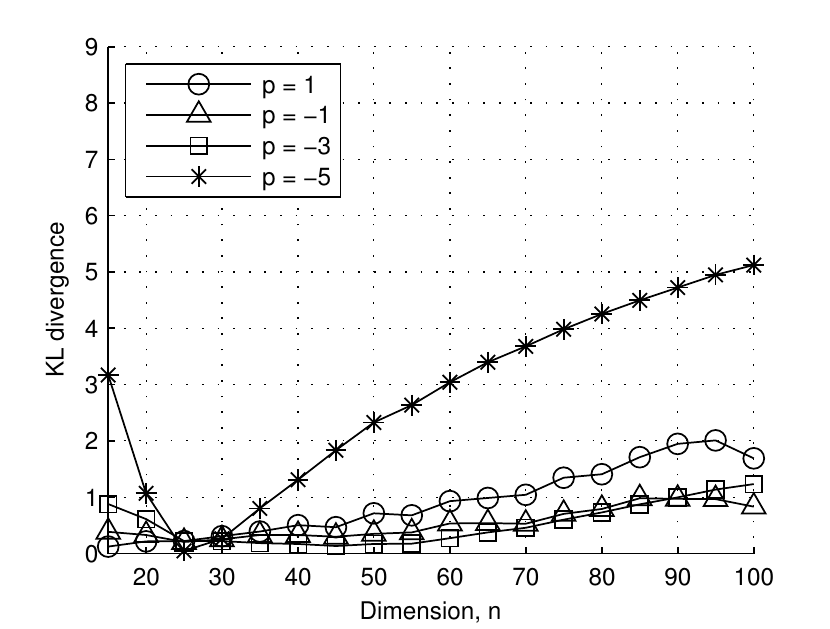}%
\label{fig:f1_gpc}} \\
\subfloat[Cubature for \eqref{eq:ara2}]{\includegraphics[width=0.33\textwidth]{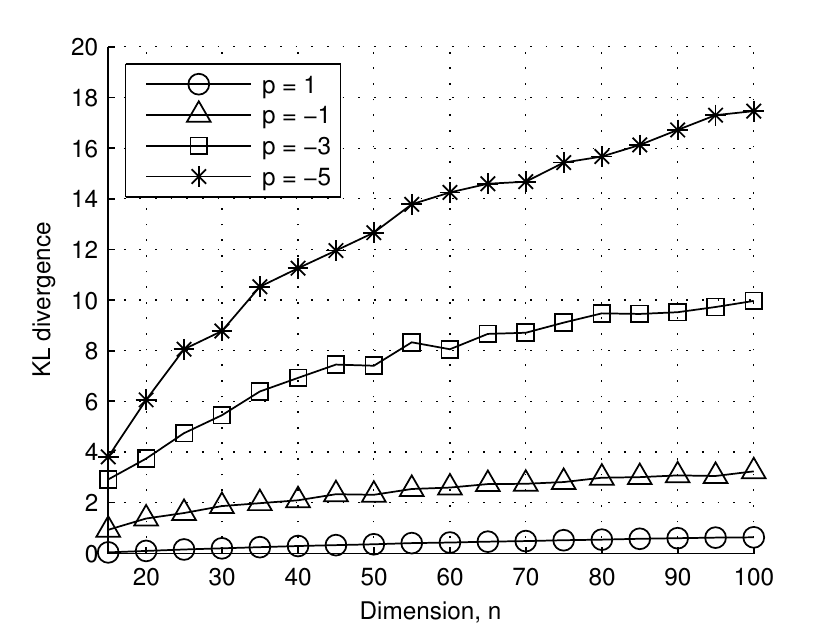}%
\label{fig:f2_ct}}
\hfil
\subfloat[GPQ-Cubature for \eqref{eq:ara2}]{\includegraphics[width=0.33\textwidth]{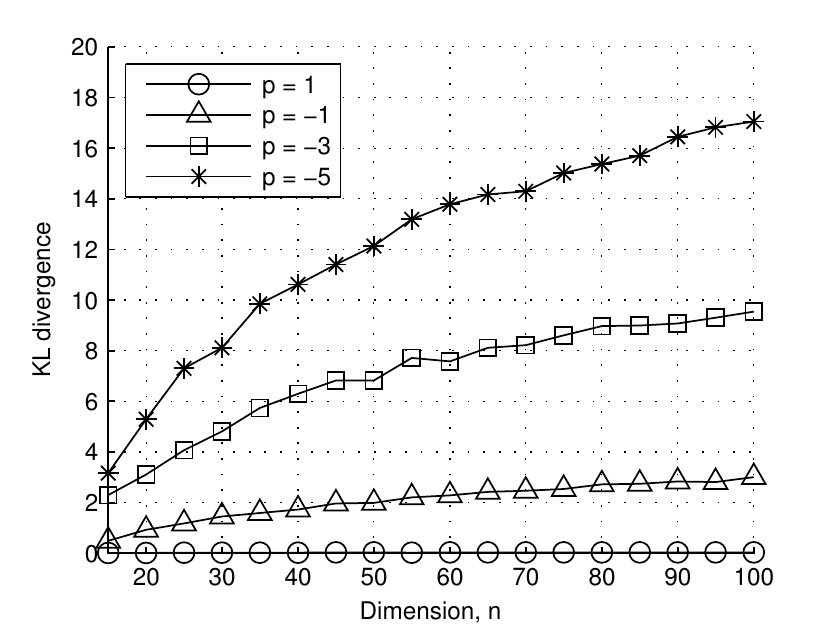}%
\label{fig:f2_gpc}}
\hfil
\subfloat[GPQ-Hammersley for \eqref{eq:ara2}]{\includegraphics[width=0.33\textwidth]{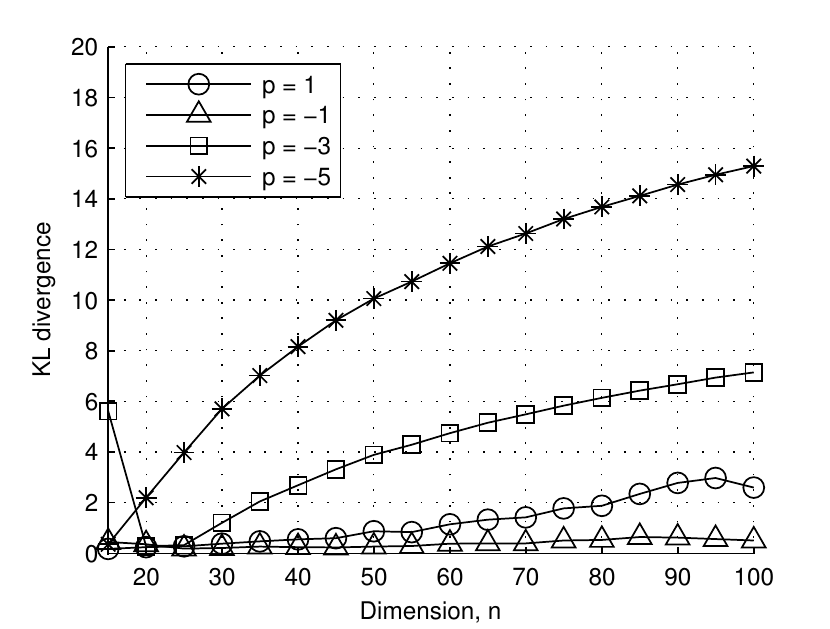}%
\label{fig:f2_gph}}
\caption{Comparison of different methods in computing the moment integrals used in \cite{Arasaratnam+Haykin:2009} for illustrating the performance of the cubature rule. It can be seen that the GPQ methods outperform the cubature rule in most of the cases.}
\label{fig:ct}
\end{figure*}

We use the same test case as in Section~VIII.A. of \cite{Arasaratnam+Haykin:2009}, that is, the computation of the first two moments of the function $y(\vx) = ( \sqrt{1 + \vx^\T \vx})^p$ for $p = 1,-2,-3,-5$. We thus aim to approximate the following integrals:
\begin{eqnarray}
  \mathrm{E}[y(\vx)] &=& \int \left( \sqrt{1 + \vx^\T \vx} \right)^p \, \N(\vx \mid \vm,\MP) \, \diff\vx,
\label{eq:ara1} \\
  \mathrm{E}[y^2(\vx)] &=& \int \left( 1 + \vx^\T \vx \right)^p \, \N(\vx \mid \vm,\MP) \, \diff\vx.
\label{eq:ara2}
\end{eqnarray}
Figure~\ref{fig:ct} shows the result of using the following methods as function of the state-dimensionality:
\begin{itemize}
\item {\em Cubature:} The 3rd order spherical cubature sigma-points ($2n$ points) with the standard integration weights.
\item {\em GPQ-Cubature:} The Gaussian process quadrature with SE covariance function and the 3rd spherical cubature sigma-points above.
\item {\em GPQ-Hammersley:} The Gaussian process quadrature with SE covariance and $2n$ Hammersley points.
\end{itemize}
The 3rd spherical cubature points refer to the integration rule proposed in \cite{McNamee+Stenger:1967}, which was also used in the cubature Kalman filter (CKF) in \cite{Arasaratnam+Haykin:2009}. In the rule, the sigma-points of placed to the intersections coordinate axes with the origin-centered $n$-dimensional hypersphere of radius $\sqrt{n}$. 

The results in Figure~\ref{fig:ct} show that the GPQ quite consistently gives a bit lower KL-divergence and hence better result than the plain cubature when the cubature points are used. When Hammersley point sets are used, the results vary a bit more: with small state dimensions the results are slightly worse than with the cubature points. When $p \ne 1$, the Hammersley results are much better in high dimensions whereas with $p = 1$ the results are worse than with the cubature point sets.

\subsection{Univariate non-linear growth model}
In this section we compare the performance of the different methods in the following univariate non-linear growth model (UNGM) which is often used for benchmarking non-linear estimation methods:
\begin{equation}
\begin{split}
  x_k &= \frac{1}{2} \, x_{k-1} + 25 \, \frac{x_{k-1}}{1 + x_{k-1}^2}
  + 8 \, \cos(1.2 \, k) + q_{k-1}, \\
  y_k &= \frac{1}{20} \, x_k^2 + r_k,
\end{split}
\end{equation}
where $x_0 \sim \N(0,5)$, $q_{k-1} \sim \N(0,10)$, and $r_k \sim \N(0,1)$.

We generated 100 independent datesets with 500 time step each and applied the following methods to it: extended, unscented ($\kappa = 2$), and cubature filters and smoothers (EKF/UKF/CKF/ERTS/URTS/CRTS); Gauss--Hermite filters and smoothers with 3, 7, and 10 points (GHKF/GHRTS); Gaussian process quadrature filter and smoother with unscented transform points (GPKFU/GPRTSU) and cubature points (GPKFC/GPRTSC); with Hammersley point sets of sizes 3, 7, and 10 (GPKFH/GPRTSH); and with minimum variance points sets of sizes 3, 7, and 10 (GPKFO/GPRTSO). The covariance function was the exponentiated quadratic with $s = 1$ and $\ell = 3$ and the noise variance was set to $10^{-8}$. The RMSE results together with single standard derivation bars are shown in Figures~\ref{fig:ungm_rmse_filters} and \ref{fig:ungm_rmse_smoothers}. As can be seen in the figures, with 5 and 10 points the Gaussian process quadrature based filters and smoothers have significantly lower errors than almost all the other methods --  only Gauss--Hermite with 10 points and the cubature RTS smoother come close. 

\begin{figure}[!htb]
\centering
\includegraphics[width=0.8\columnwidth]{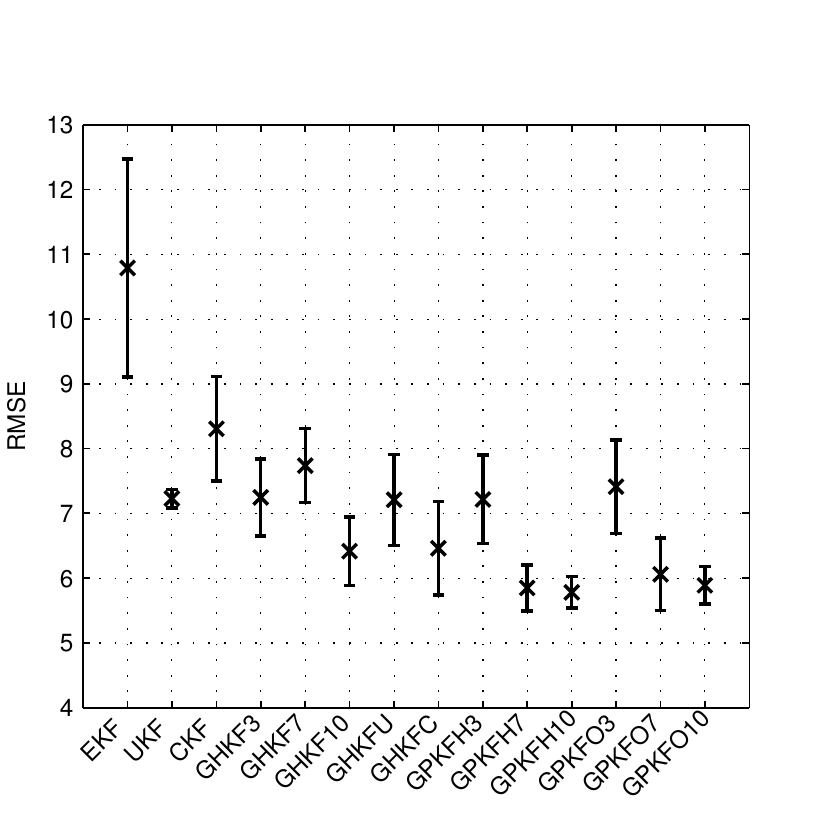}
\caption{RMSE results of filters in the UNGM experiment.}
\label{fig:ungm_rmse_filters}
\end{figure}

\begin{figure}[!htb]
\centering
\includegraphics[width=0.8\columnwidth]{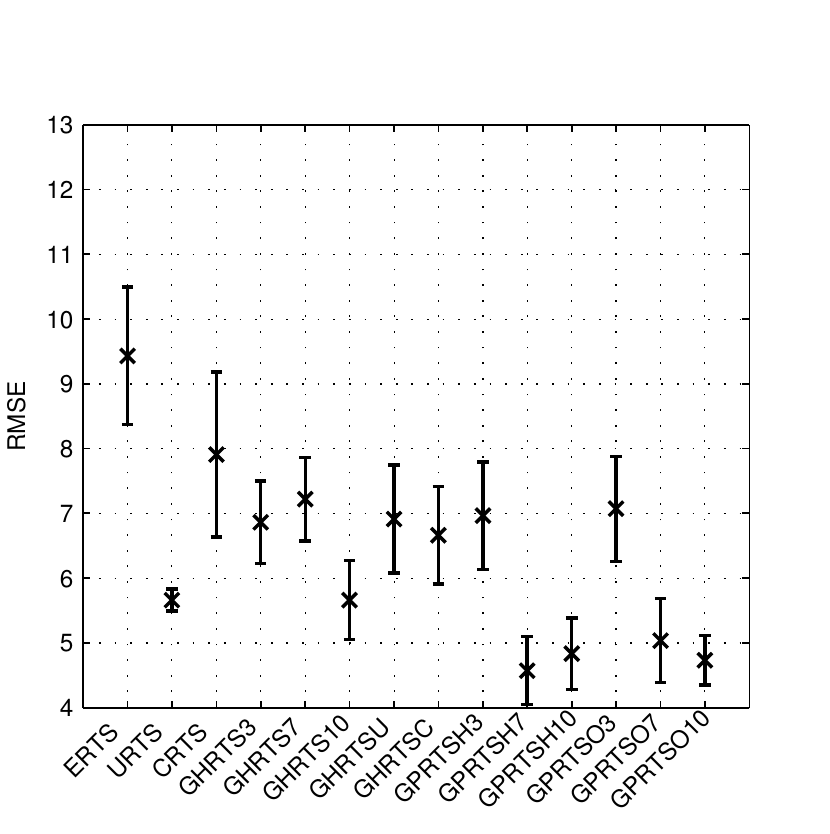}
\caption{RMSE results of smoothers in the UNGM experiment.}
\label{fig:ungm_rmse_smoothers}
\end{figure}

\subsection{Bearings only target tracking}
In this section we evaluate the methods in the bearings only target tracking problem with a coordinated-turn dynamic model, which was also considered in Section III.A of the article \cite{Sarkka+Hartikainen:2010a}. The non-linear dynamic model is
\begin{equation}
\vx_k = \begin{pmatrix} 1 & \frac{\sin(\omega_k \, \Delta t)}{\omega} & 0 & -
  \left( \frac{1-\cos(\omega_k \, \Delta t)}{\omega} \right)& 0 \\
0 & \cos(\omega_k \, \Delta t) & 0 & -\sin(\omega_k \, \Delta t) & 0 \\
0 & \frac{1-\cos(\omega_k \, \Delta t)}{\omega_k \, }& 1 & \frac{\sin(\omega
  \Delta t)}{\omega} & 0 \\
0 & \sin(\omega_k \, \Delta t) & 0 & \cos(\omega_k \, \Delta t) & 0 \\
0 & 0 & 0 & 0 & 1 \end{pmatrix} \vx_{k-1} + \vq_{k-1}, \label{eq:ct_model}
\end{equation}
where the state of the target is $\vx =
(x_1,\dot{x}_1,x_2,\dot{x}_2,\omega)$, and $x_1, x_2$ are the
coordinates and $\dot{x}_1, \dot{x}_2$ are the velocities in two
dimensional space.  The time step size is set to $\Delta t = 1 \;\text{s}$
and the covariance of the process noise $q_k \sim N(0,Q)$ is
\begin{equation}
Q = \begin{pmatrix}
q_1\frac{\Delta t3}{3} & q_1\frac{\Delta t2}{2} & 0 & 0 & 0 \\
q_1\frac{\Delta t2}{2} & q_1\Delta t & 0 & 0 & 0\\
0 & 0 & q_1\frac{\Delta t3}{3} & q_1\frac{\Delta t2}{2} & 0  \\
0 & 0 & q_1\frac{\Delta t2}{2} & q_1\Delta t & 0 \\
0 & 0 & 0 & 0 & q_2 \Delta t\\
\end{pmatrix},
\end{equation}
where we used $q_1 = 0.1 \text{m}2 \text{s}^{-3}$ and $q_2=1.75 \times 10^{-4} \text{s}^{-3}$.

In the simulation setup we have four sensors measuring the angles $\theta$
between the target and the sensors. The non-linear measurement model
for sensor $i$ can be written as
\begin{equation}
\theta^i = \arctan \left( \frac{x_2 - s^i_2}{x_1 - s^i_1}\right) +
r^i, \label{eq:bot_model}
\end{equation}
where $(s^i_1,s^i_2)$ is the position of the sensor $i$ in two
dimensions, and $r^i \sim N(0,\sigma^2_{\theta})$ is the
measurement noise. The used parameters were the same as in the article \cite{Sarkka+Hartikainen:2010a}.

The RMSE results for the position errors are shown in Figures \ref{fig:botm_rmse_filters} and \ref{fig:botm_rmse_smoothers}. Clearly all of the sigma-point methods outperform the Taylor series based methods (EKF/EKS). However, the performances of all the sigma-point methods are very similar: also the Gaussian process quadrature methods give very similar results to the other sigma-point methods. There is a small dip in the errors at the Gauss--Hermite based methods as well as in the highest order Hammersley GPQ method, but practically the performance of all the sigma-point methods is the same.

\begin{figure}[!htb]
\centering
\includegraphics[width=0.8\columnwidth]{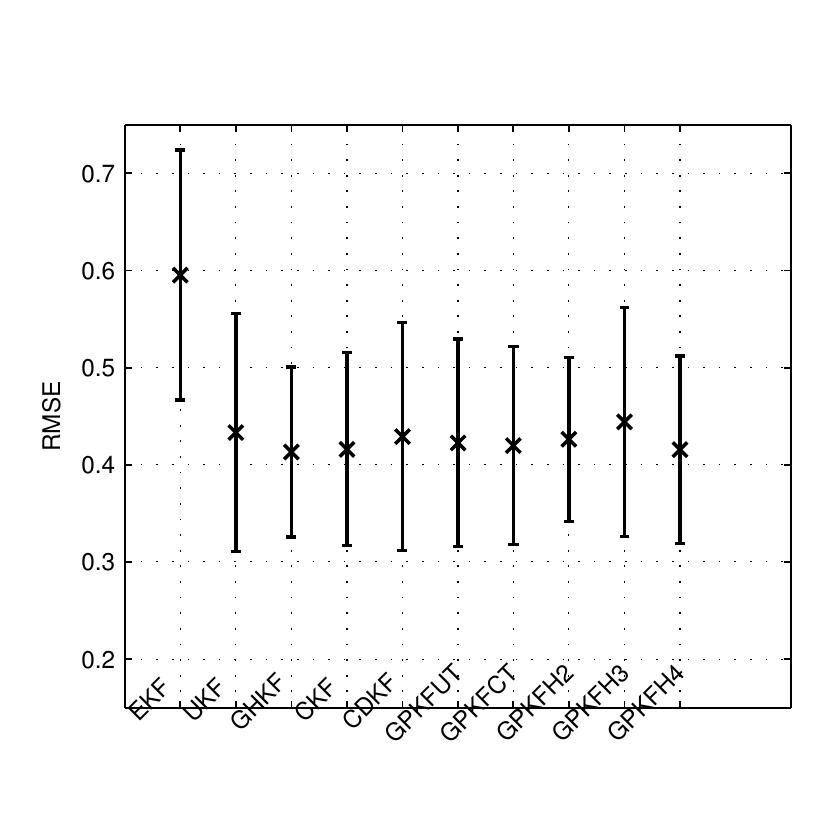}
\caption{Position RMSE results of filters in the bearings only experiment.}
\label{fig:botm_rmse_filters}
\end{figure}

\begin{figure}[!htb]
\centering
\includegraphics[width=0.8\columnwidth]{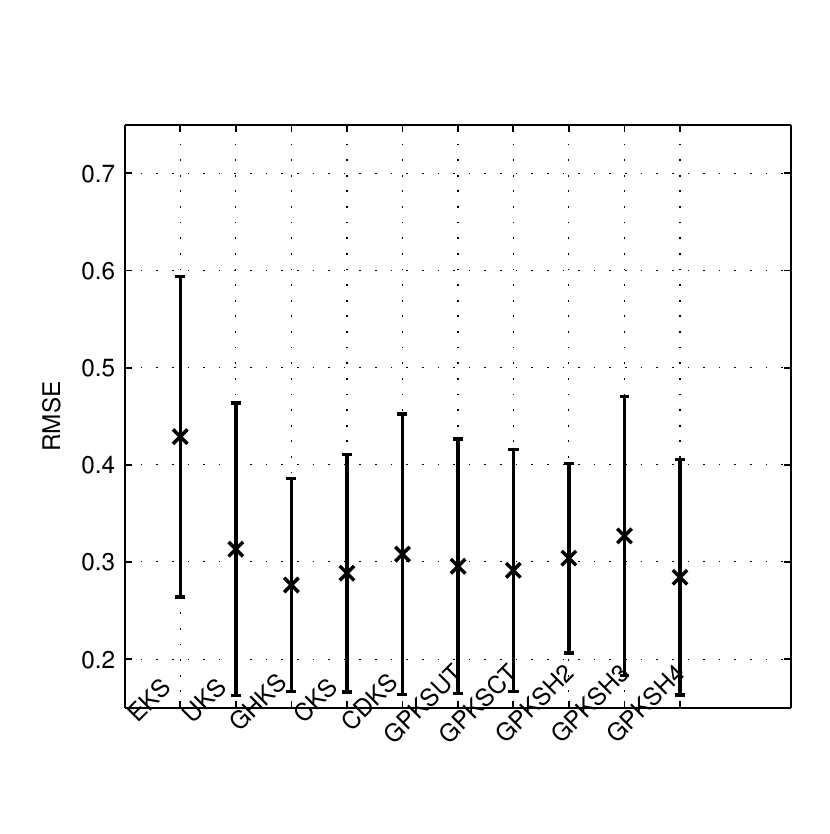}
\caption{Position RMSE results of smoothers in the bearings only experiment.}
\label{fig:botm_rmse_smoothers}
\end{figure}

\section{Conclusion}
In this article we have proposed new Gaussian process quadrature based non-linear Kalman filtering and smoothing methods and analyzed their relationship with other sigma-point filters and smoothers. We have also discussed the selection of the evaluation points for the quadratures with respect to different criteria: exactness for multivariate polynomials up to a given order, minimum average error, and quasi-random point sets. We have shown that with suitable selections of (polynomial) covariance functions for the Gaussian processes the filters and smoothers reduce to unscented Kalman filters of different orders as well as to Gauss--Hermite Kalman filters and smoothers. By numerical experiments we have also shown that the Gaussian process quadrature rules as well as the corresponding filters and smoothers often outperform previously  proposed (polynomial) integration rules and sigma-point filters and smoothers.

\appendices
\section{Fourier--Hermite series} \label{sec:fh}
Fourier--Hermite series (see, e.g., \cite{Malliavin:1997}) are orthogonal polynomial series in a Hilbert space, where the inner product is defined via an expectation of a product over a Gaussian distributions. These series are also inherently related to non-linear Gaussian filtering as they can be seen as generalizations of statistical linearization and they also have a deep connection with unscented transforms, Gaussian quadrature integration, and Gaussian process regression \cite{Sarmavuori+Sarkka:2012,Sandblom+Svensson:2012,Sarkka:2013}.

We can define an inner product of multivariate scalar functions $f(\vx)$ and $g(\vx)$ as follows:
\begin{equation}
  \langle f,g \rangle = \int f(\vx) \, g(\vx) \, \N(\vx \mid \mathbf{0},\MI) \, \diff \vx.
\end{equation}
If we now define a norm via $||f||_{\mathcal{H}}^2 = \langle f,f
\rangle$, and the corresponding distance function $d(f,g) = ||f -
g||_{\mathcal{H}}$, then the functions $||f||_{\mathcal{H}} < \infty$ form a Hilbert space $\mathcal{H}$. It now turns out that the
multivariate Hermite polynomials form a complete orthogonal basis of
the resulting Hilbert space \cite{Malliavin:1997}.

A multivariate Hermite polynomial with multi-index $\mathcal{I} = \{
i_1,\ldots,i_n \}$ can be defined as
\begin{equation}
    H_{\mathcal{I}}(\vx) 
    = H_{i_1}(x_1) \times \cdots \times H_{i_n}(x_n)
\end{equation}
which is a product of univariate Hermite polynomials
\begin{equation}
  H_p(x) = (-1)^p \, \exp(x^2/2) \, \frac{\diff^p}{\diff x^p} \exp(-x^2/2).
\end{equation}
The orthogonality property can now be expressed as
\begin{equation}
  \langle H_{\mathcal{I}}, H_{\mathcal{J}} \rangle
  = \begin{cases}
     \mathcal{I}!, & \text{if } \mathcal{I} = \mathcal{J} \\
     0, & \text{otherwise},
  \end{cases}
\end{equation}
where we have denoted $\mathcal{I}! = i_1! \cdots i_n!$ and
$\mathcal{I} = \mathcal{J}$ means that each of the elements in the
multi-indices $\mathcal{I} = \{ i_1,\ldots,i_n \}$ and $\mathcal{J} =
\{ j_1,\ldots,j_n \}$ are equal. We will also denote the sum of
indices as $|\mathcal{I}| = i_1 + \cdots + i_n$.

A function $g(\vx)$ with $\langle g, g \rangle < \infty$ can be
expanded into Fourier--Hermite series \cite{Malliavin:1997}
\begin{equation}
  g(\vx) = \sum_{p=0}^{\infty} \sum_{|\mathcal{I}| = p} 
  \frac{1}{\mathcal{I}!} c_{\mathcal{I}} \, H_{\mathcal{I}}(\vx),
\label{eq:fh}
\end{equation}
where $H_{\mathcal{I}}(\vx)$ are multivariate Hermite polynomials and the series coefficients are given by the inner products $c_{\mathcal{I}} = \langle H_{\mathcal{I}}, g \rangle$.

Consider a Gaussian process $g_G(\vx)$ which has zero mean and a
covariance function $K(\vx,\vx')$. In the same way as deterministic functions, Gaussian processes can also be expanded into Fourier--Hermite series:
\begin{equation}
  g_G(\vx) = \sum_{p=0}^{\infty} \sum_{|\mathcal{I}| = p} 
  \frac{1}{\mathcal{I}!} \tilde{c}_{\mathcal{I}} \, H_{\mathcal{I}}(\vx),
\label{eq:fhgp}
\end{equation}
where the coefficients are given as $\tilde{c}_{\mathcal{I}} = \langle H_{\mathcal{I}}, g_G \rangle$. The coefficients
$\tilde{c}_{\mathcal{I}}$ are zero mean Gaussian random variables and their covariance is given as
\begin{equation}
\begin{split}
  \E\left[\tilde{c}_{\mathcal{I}} \, \tilde{c}_{\mathcal{J}}\right]
  &= \E\left[\langle H_{\mathcal{I}}, g_G \rangle \,
        \langle H_{\mathcal{J}}, g_G \rangle\right] \\
  &= 
  \iint H_{\mathcal{I}}(\vx) \, 
  K(\vx,\vx') \, H_{\mathcal{J}}(\vx') 
  \\ &\qquad \times
       \N(\vx \mid \mathbf{0},\MI) \,
       \N(\vx' \mid \mathbf{0},\MI) \, \diff \vx \, \diff \vx'.
\end{split}
\end{equation}
If we define constants $\lambda_{\mathcal{I},\mathcal{J}} =
\E\left[\tilde{c}_{\mathcal{I}} \, \tilde{c}_{\mathcal{J}}\right]$
then the covariance function $K(\vx,\vx')$ can be further written as
series
\begin{equation}
  K(\vx,\vx') = 
  \sum_{q=0}^{\infty} \sum_{|\mathcal{J}| = q} 
  \sum_{p=0}^{\infty} \sum_{|\mathcal{I}| = p} 
  \frac{1}{\mathcal{I}! \, \mathcal{J}!} \lambda_{\mathcal{I},\mathcal{J}}
    \, H_{\mathcal{I}}(\vx) \, H_{\mathcal{J}}(\vx').
\end{equation}

\ifCLASSOPTIONcaptionsoff
  \newpage
\fi

%
%
%
%
%
%
%

%

%
%
%
%
%
%

%
%
%
%
%
%
%
%
%
%
%
%

%
%
%
%
%
%
%
%
%
%
%
%
%
%
%
%
%
%
%
%
%
%
%
%
%
%
%
%
%
%
%
%
%

%

%
%
%

%
\end{document}